\documentclass[conference]{IEEEtran}
\IEEEoverridecommandlockouts
\usepackage{cite}
\usepackage{amsmath,amssymb,amsfonts,amsthm}
\theoremstyle{remark}

\newtheorem{prop}{Proposition}
\newtheorem{defi}{Definition}
\usepackage{algorithm}
\usepackage{algorithmic}
\usepackage{graphicx}
\usepackage{textcomp}
\usepackage{xcolor}
\usepackage{color}
\usepackage{stfloats}
\usepackage{subfig}
\begin{document}
\title{Energy-efficiency of Massive Random Access with Individual Codebook\\}
\author{\IEEEauthorblockN{Junyuan Gao, Yongpeng Wu, and Wenjun Zhang}
\thanks{J. Gao, Y. Wu, and W. Zhang are with the Department of Electronic Engineering, Shanghai Jiao Tong University, Minhang 200240, China (e-mail: sunflower0515@sjtu.edu.cn; yongpeng.wu@sjtu.edu.cn; zhangwenjun@sjtu.edu.cn) (Corresponding author: Yongpeng Wu). The work of Y. Wu is supported in part by the National Key R\&D
Program of China under Grant 2018YFB1801102,  JiangXi Key R\&D Program
under Grant 20181ACE50028, National Science Foundation
(NSFC) under Grant 61701301, the open research project of State Key Laboratory of Integrated
Services Networks (Xidian University) under Grant ISN20-03, and Shanghai Key Laboratory of Digital Media Processing and Transmission (STCSM18DZ2270700).
The work of W. Zhang is supported by  Shanghai Key Laboratory of Digital Media Processing and Transmission (STCSM18DZ2270700).}
}
\maketitle

\begin{abstract}
  The massive machine-type communication has been one of the most representative services for future wireless networks.
  It aims to support massive connectivity of user equipments (UEs) which sporadically transmit packets with small size.
  In this work, we assume the number of UEs grows linearly and unboundedly with blocklength and each UE has an individual codebook.
  Among all UEs, an unknown subset of UEs are active and transmit a fixed number of data bits to a base station over a shared-spectrum radio link.
  Under these settings, we derive the achievability and converse bounds on the minimum energy-per-bit for reliable random access over quasi-static fading channels with and without channel state information (CSI) at the receiver.
  These bounds provide energy-efficiency guidance for new schemes suited for massive random access.
  Simulation results indicate that the orthogonalization scheme TDMA is energy-inefficient for large values of UE density $\mu$.
  Besides, the multi-user interference can be perfectly cancelled when $\mu$ is below a critical threshold.
  In the case of no-CSI, the energy-per-bit for random access is only a bit more than that with the knowledge UE activity.
\end{abstract}
\begin{IEEEkeywords}
CSIR, finite payload, individual codebook, massive random access, no-CSI, quasi-static fading channels
\end{IEEEkeywords}

\section{Introduction} \label{section1}
  Driven by many use cases, such as Internet of Things, massive machine-type communication (mMTC) has been regarded as a necessary service in future wireless networks \cite{wuyp}.
  It aims to achieve the communication between massive user equipments (UEs) and the base station (BS), where only a fraction of UEs are active at any given time interval and transmit data payloads with small size.
  In order to reduce latency, the grant-free random access scheme is usually adopted, where the UE activity is unknown in advance for the receiver.

  Some subsets of these topics have been discussed in the past.
  When the number of UEs $K$ is finite and blocklength $n$ is infinite, the fundamental limits can be obtained based on classical multiuser information theory \cite{elements_IT}.
  Motivated by emerging systems with massive UEs, a new paradigm was proposed in \cite{GuoDN}, where $K$ was allowed to grow unboundedly with $n$ and the payload was infinite.
  The linear scaling was adopted in \cite{GuoDN, improved_bound, finite_payloads_fading, A_perspective_on, RAC_fading}.
  Since the data payload of UEs was usually of small size, \cite{improved_bound} and \cite{finite_payloads_fading} derived  bounds on the minimum energy-per-bit for reliable transmission with finite payload and infinite blocklength in AWGN and fading channels, respectively.
  \cite{A_perspective_on} and \cite{RAC_fading} studied random access limits in AWGN and fading channels, respectively, where common codebook was adopted with finite payload, blocklength, and active UEs. When common codebook is utilized, there is no obvious difference for theoretical derivation between random access and massive access with the knowledge of UE activity.

  In this work, we adopt individual codebook\footnote{It should be noted that individual codebook and common codebook assumptions correspond to different massive access models in practice, which jointly constitute a complete mMTC research [1].} for each UE. The number of UEs $K$ grows linearly and unboundedly with blocklength. Any $K_a$ of them have a fixed number of data bits to send over quasi-static fading channels. We assume synchronous transmission.
  This system model has not been studied before but is significant in mMTC.
  The infinite payload assumption in classical multiuser information theory will result in infinite energy-per-bit when $K$ and $n$ go to infinity with a fixed rate, which is not suitable for practice mMTC systems \cite{wuyp}. Thus, we consider finite payload and finite energy codewords in this paper for energy-efficiency \cite{wuyp}.
  We obtain the bounds on the minimum energy-per-bit for reliable random access with no channel state information (CSI) and CSI at the receiver (CSIR).
  These bounds provide energy-efficiency guidance for new massive random access coding and communication schemes.
  Since we do not know UE activity in advance, the derivation is more difficult compared with  \cite{finite_payloads_fading}.

  \emph{Notation:}
  We adopt uppercase and lowercase boldface letters to denote matrices and column vectors, respectively.
  $\mathbf{I}_{n}$ denotes the $n\times n$ identity matrix.
  Let $\left(\cdot \right)^{H}\!$, $\oplus$, $\left\|\cdot \right\|_{p}$, $\cdot\backslash\cdot$, $\left| \mathcal{A} \right|$, and $\operatorname{span}(\cdot)$ denote conjugate transpose, direct sum, ${\ell}_p$-norm, set subtraction, the cardinal of set $\mathcal{A}$, and the span of a set of vectors, respectively.
  Let $\mathcal{CN}(\cdot,\cdot)$, $\beta(\cdot,\cdot)$, $\chi_2(\cdot)$, and $\chi'_2(\cdot,\cdot)$ denote circularly symmetric complex Gaussian distribution, beta distribution, chi-squared distribution, and non-central chi-squared distribution, respectively.
  For $0\!\leq\! p\!\leq\!1$, let $h(p) \!=\! -p\ln(p)-(1\!-\!p)\ln(1\!-\!p)$ and $h_2(p) \!=\! h(p)/\ln 2$ with $0\ln 0$ defined to be $0$.
  $\mathbb{N}_{+}$ denotes the set of nonnegative natural numbers.
  For $n\!\in\! \mathbb{N}_{+}$, let $[n] \!=\! \left\{1,2,\cdots\!,n\right\}$.
  Denote the projection matrix on to the subspace spanned by $S$ and its orthogonal complement as $\mathcal{P}_{S}$ and $\mathcal{P}_{S}^{\bot}$, respectively.
  Let $f(x) \!=\! \mathcal{O}\!\left( g(x)\right)$, $x \!\to\! \infty$ mean $\limsup_{x\to\infty}\!\left|f(x)/g(x) \right|\!<\!\infty$ and $f(x) \!=\! o\!\left( g(x)\right)$, $x \!\to\! \infty$ mean $\lim_{x\to\infty}\!\left|f(x)/g(x) \right|\!=\!0$.

\section{System Model} \label{section2}
  We assume the number of UEs $K$ grows linearly and unboundedly with the blocklength $n$, i.e., $K\!=\!\mu n$, $\mu \!<\! 1$, and $n\!\to\! \infty$ \cite{improved_bound}.
  We denote the number of active UEs as $K_a \!=\! p_a K $, where $p_a$ is  the active probability of each UE. The UE set and active UE set are denoted as $\mathcal{K}$ and $\mathcal{K}_a$, respectively.

  We assume each UE has an individual codebook with $M=2^k$ vectors.
  The codebook of the $j$-th UE is denoted as $\mathcal{C}_{j}=\left\{\mathbf{c}_{1}^{j}, \mathbf{c}_{2}^{j},\ldots, \mathbf{c}_{M}^{j}\right\}$, where $\mathbf{c}_{m}^{j}\in \mathbb{C}^{n}$ for $m\in[M]$.
  We consider quasi-static fading channels. The receiver observes $\mathbf{y}$ given by
  \begin{equation}\label{receive_y1}
    \mathbf{y} = \sum_{j\in{\mathcal{K}_a}}{h}_j\mathbf{c}^{j}+\mathbf{z} =\mathbf{AH}\boldsymbol{\beta}+\mathbf{z}\in \mathbb{C}^{n},
  \end{equation}
  where ${h}_j \!\stackrel{i.i.d.}{\sim} \!\mathcal{CN}(0,1)$ denotes the fading coefficient between BS and UE $j$,
  $\mathbf{H}$ is a $KM\!\times\! KM$ block diagonal matrix where block $i$ is an $M\times M$ diagonal matrix with diagonal entries be ${h}_i$, and $\mathbf{z}$ includes i.i.d. $\mathcal{CN}(0,1)$ entries.
  We require the power constraint $\left\|\mathbf{c}_m^{j}\right\|_{2}^{2} \!\leq \!n P$ for $m\in [M]$.
  We denote the signal of UE $j$ as $\mathbf{c}^{j} \!=\! \mathbf{c}_{W_j}^{j}$, where $W_{j} \!\in\! [M]$ is chosen uniformly at random.
  The $((j\!-\!1)M\!+\!1)$-th column to the $(jM)$-th column of $\mathbf{A}$ are codewords of UE $j$.
  The block-sparse vector $\boldsymbol{\beta} \!\in\! \left\{\!\boldsymbol{\beta} \!\in\! \{0,1\}^{\!KM}\!\!: \!\left\| \boldsymbol{\beta}\right\|_{0} \!=\! K_a, \sum_{i=(j-1)M+1}^{jM} \!\boldsymbol{\beta}_{i} \!=\! \{0,1\}, \!\forall j\!\in\! \mathcal{K}\!\right\}$.
  The decoder aims to find the estimate $\hat W_j$.
  We use the per-user probability of error (PUPE) $\epsilon$ as performance metric \cite{A_perspective_on}
  \begin{equation}\label{PUPE}
    P_{e}\!=\!\mathbb{E}\!\left[\! \frac{1}{K_a} \!\sum_{j\in {\mathcal{K}_a}} \!1 \!\left[W_{j} \!\neq \!\hat{W}_{j}\right] \!\right]
    \leq \epsilon.
  \end{equation}

  The system achieves the spectral efficiency $S\!=\!p_a\mu k$ and  energy-per-bit $\varepsilon \!= \!\frac{nP}{k} \!=\! \frac{P_{tot,a}}{S}$, where ${P_{tot,a}} \!=\! K_aP$ denotes the total power of active UEs.
  For finite $\varepsilon$, we consider finite $P_{tot,a}$, i.e., $P$ decaying as $\mathcal{O}(1/n)$.
  In this case, we have:
  \begin{defi}\label{defi1}
    An $(n,M,\epsilon,\varepsilon,K,K_a)$ random access code for the channel $ P_{\mathbf{y} | \mathbf{c}^{1}, \ldots, \mathbf{c}^{K_a}}\!:\! \prod_{j\in\mathcal{K}_a} \!\mathcal{C}_j \!\to\! \mathcal{Y}$ is a pair of (possibly randomized) maps including encoders $\left\{ f_j\!: \!\left[M \right] \!\to\!\mathcal{C}_j \right\}_{j=1}^{K}$ and the decoder $g\!: \! \mathcal{Y} \!\to\!\binom {[M]}{K_{a}}$ such that power constraint is satisfied and $P_e\leq\epsilon$. Then, we have the following fundamental limit
    \begin{equation}\label{energy_per_bit_limit}
      \varepsilon^{*}\!(M, \mu, p_a, \epsilon)\!=\!\!\lim_{n \rightarrow \infty} \!\inf \{\varepsilon\!:\! \exists(n, M, \epsilon, \varepsilon, K, K_a)\!-\!\text{ \!\!code }\!\!\},
    \end{equation}
    where the infimum is taken over all possible encoders and decoders, and the limit is understood as $\liminf$ or $\limsup$ depending on whether an upper or a lower bound is given.
  \end{defi}

\section{ Achievability Bound } \label{section3}
\subsection{CSIR} \label{section3_sub1}
  Assuming decoder knows the realization of channel fading coefficients, we can use euclidean metric to decode and obtain:
  \begin{prop}\label{prop_achi_CSIR}
    Fix spectral efficiency $S$ and target PUPE $\epsilon$.
    Given $\nu\!\in\!(1\;\!\!-\;\!\!\epsilon,\!1]$ and $\epsilon' \!\!=\! \epsilon - 1+\nu$ with CSIR,
    if $\varepsilon \!>\! \varepsilon_{CSIR}^{*}  \!=\! \sup_{\theta\in(\epsilon'\!,\nu]} \! \sup_{\psi\in[0,\nu-\theta]} \! \frac{P_{tot,a}^{'}\!(\theta,\psi)}{S}$, there exists a sequence of $(n, M\!, \epsilon_n, \varepsilon, K\!, K_a)$ codes such that
     $\limsup_{n\to\infty}\epsilon_n\!\leq\!\epsilon$, where
    \begin{equation}\label{P_tot_achi_CSIR}
      P_{tot,a}^{'}\!\!\left(\theta,\!\psi \right) \!=\!\! \frac{4\left({ \exp\!\left\{\!\gamma_{\theta}\!\right\} \!-\! 1} \right)}
      {\xi\!\left( \psi , \!\psi\!+\!\theta \right) \!-\! 4\!\left({ \exp\!\left\{\!\gamma_{\theta}\!\right\} \!-\! 1} \right) \!\xi\!\left( \psi\!+\!\theta , \!\psi\!+\!\theta\!+\!\!1\!\!-\!\nu \right) }\!,
    \end{equation}
    \begin{equation}\label{xi_achi_CSIR}
      {\xi\!\left( \varsigma,\zeta \right)} = \varsigma \ln(\varsigma) - \zeta \ln(\zeta) + \zeta - \varsigma,
    \end{equation}
    \begin{equation}
      \gamma_{\theta} \;\!\!\!=\;\!\!\!p_a \;\!\!\mu h\;\!\!(\;\!\!1\!-\nu+\theta) \!+\;\!\!\mu\!\left(\;\!\!1\!\!-\!\!\nu p_a\!\!+\;\!\!\!\theta p_a \;\!\!\right) \!h\!\;\!\!\left(\;\!\!\! \frac{\theta{p_a}}{{1}\!\!+\!\!\theta{p_a}\!\!\;\!\!-\!\!\nu{p_a}}\;\!\!\!\right)\!+\theta p_a\;\!\!\mu\!\ln\! M \!.
    \end{equation}
  Hence $\varepsilon^{*} (M, \mu, p_a, \epsilon) \leq \varepsilon_{CSIR}^{*}$.
  \end{prop}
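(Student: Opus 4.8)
\emph{Proof idea.} The plan is a random‑coding argument with Gaussian codebooks together with a minimum‑distance (maximum‑likelihood) decoder that exploits the CSIR. Construct the codebooks by drawing every $\mathbf{c}_{m}^{j}$ i.i.d. $\sim\mathcal{CN}(0,P'\mathbf{I}_{n})$ with $P'$ a hair below $P=P_{tot,a}/K_{a}$; since $\|\mathbf{c}_{m}^{j}\|_{2}^{2}$ concentrates at $nP'<nP$, the hard power constraint is enforced by truncating the exponentially few offending codewords to the sphere of radius $\sqrt{nP}$, which perturbs all subsequent estimates by a vanishing amount. Because the decoder knows $\mathbf{H}$, let it output $\hat{\boldsymbol{\beta}}=\arg\min_{\boldsymbol{\beta}}\|\mathbf{y}-\mathbf{A}\mathbf{H}\boldsymbol{\beta}\|_{2}^{2}$ over all admissible block‑sparse $\boldsymbol{\beta}$, i.e., it compares $\mathbf{y}$ against the hypothesized noiseless observation $\mathbf{v}(\boldsymbol{\beta}):=\mathbf{A}\mathbf{H}\boldsymbol{\beta}$.

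Next, decompose the PUPE. Let $t_{\mathrm{err}}$ be the number of active users decoded incorrectly. Using $t_{\mathrm{err}}/K_{a}\le\mathbf{1}[t_{\mathrm{err}}>\epsilon'K_{a}]+\epsilon'$ with $\epsilon'=\epsilon-1+\nu$, it suffices to show $P(t_{\mathrm{err}}>\epsilon'K_{a})\le 1-\nu$; here $\nu$ is the free parameter trading the ``too‑many‑errors'' event against the tolerated residual error, and, after a concentration step on $\mathbf{z}$ and on the gains $\{h_{j}\}$, only error patterns of size $t=\theta K_{a}$ with $\theta\in(\epsilon',\nu]$ need to be controlled (hence the outer $\sup_{\theta}$). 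Each candidate pattern is described by which active users are re‑decoded to a wrong message, which active users are declared inactive, and which inactive users are falsely activated together with their spurious messages; the last two sets have equal size because $\|\boldsymbol{\beta}\|_{0}=K_{a}$. One then union‑bounds $P(t_{\mathrm{err}}>\epsilon'K_{a})$ over all such patterns, grouped by $t$.

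For a fixed pattern put $\mathbf{d}=\mathbf{v}(\boldsymbol{\beta}')-\mathbf{v}(\boldsymbol{\beta}^{\star})$; since $\mathbf{y}=\mathbf{v}(\boldsymbol{\beta}^{\star})+\mathbf{z}$, the decoder prefers $\boldsymbol{\beta}'$ only if $2\,\mathrm{Re}\langle\mathbf{z},\mathbf{d}\rangle\ge\|\mathbf{d}\|_{2}^{2}$, and conditioned on $\mathbf{d}$ this has probability $Q(\|\mathbf{d}\|_{2}/\sqrt{2})\le e^{-\|\mathbf{d}\|_{2}^{2}/4}$ --- the origin of the factors $4$ in \eqref{P_tot_achi_CSIR}. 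As the codewords involved are fresh Gaussians, conditioned on the fading one has $\mathbf{d}\sim\mathcal{CN}(0,P'G\,\mathbf{I}_{n})$, where $G$ is a weighted sum of squared gains $|h_{j}|^{2}$ --- weight $2$ for a wrong‑message user (two independent codewords subtract), weight $1$ for a dropped or falsely‑activated user --- so that $\mathbb{E}[\,e^{-\|\mathbf{d}\|_{2}^{2}/4}\mid G\,]=(1+P'G/4)^{-n}$.

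Finally one assembles the bound. The number of size‑$t$ patterns is $\exp\{n\gamma_{\theta}(1+o(1))\}$: applying $\binom{a}{b}\le e^{a h(b/a)}$ to the three combinatorial choices and $(M-1)^{t}\le M^{t}$ to the message choices produces precisely the two binary‑entropy terms and the $\theta p_{a}\mu\ln M$ term of $\gamma_{\theta}$. Since $t$ grows linearly in $n$, $G$ is in the large‑deviations regime; the $|h_{j}|^{2}$ are i.i.d. $\mathrm{Exp}(1)$, and the Cram\'er rate governing $P(G\le g)$, after introducing the splitting parameter $\psi\in[0,\nu-\theta]$ (a Chernoff/threshold variable selecting how much of $G$ one relies on), is exactly $\xi(\varsigma,\zeta)=\varsigma\ln\varsigma-\zeta\ln\zeta+\zeta-\varsigma=\int_{\zeta}^{\varsigma}\ln t\,dt$ evaluated at the normalized counts $\psi$, $\psi+\theta$, $\psi+\theta+1-\nu$. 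Requiring $\exp\{n\gamma_{\theta}\}(1+P'G/4)^{-n}$, weighed against this rate, to keep the total over $t$ below $1-\nu$ forces $P'G/4>e^{\gamma_{\theta}}-1$ on the reliable range of $G$, which rearranges into $P_{tot,a}>P_{tot,a}'(\theta,\psi)$; taking $\sup_{\theta,\psi}$ and dividing by $S$ gives $\varepsilon>\varepsilon_{CSIR}^{*}$, and a routine expurgation converts the ensemble statement into a deterministic code sequence, establishing $\varepsilon^{*}(M,\mu,p_{a},\epsilon)\le\varepsilon_{CSIR}^{*}$. The main obstacle is this last step: organizing the three kinds of error events so that the counting exponent is exactly $\gamma_{\theta}$, and matching the large‑deviations estimate of $G$ to the exact $\xi$‑terms --- in particular explaining why the denominator of \eqref{P_tot_achi_CSIR} is the difference $\xi(\psi,\psi+\theta)-4(e^{\gamma_{\theta}}-1)\xi(\psi+\theta,\psi+\theta+1-\nu)$ rather than a single rate --- while keeping the Gaussian‑codebook power‑constraint handling asymptotically lossless.
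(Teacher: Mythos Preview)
Your general framework (Gaussian random coding, minimum-distance decoding, Chernoff on the pairwise error, union bound indexed by the number of mis-decoded users) is right, but the proposal misidentifies where $\nu$, $\psi$, and the two $\xi$-terms come from, and as written the argument would not produce \eqref{P_tot_achi_CSIR}.

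The key device you are missing is that the paper's decoder does \emph{not} decode all $K_a$ users: it outputs estimates for only $K_{a_1}=\nu K_a$ users and declares ``?'' for the rest. This is what makes $\nu$ a genuine design parameter and is exactly why $\theta$ ranges over $(\epsilon',\nu]$: if the decoder returns $\nu K_a$ guesses, the number of errors is $(1-\nu)K_a+t$ with $t\in[0,\nu K_a]$, and the event $(1-\nu)+\theta>\epsilon$ gives $\theta>\epsilon'$. With your full-ML decoder $\theta$ would have to run all the way to $1$; your claim that ``concentration on $\mathbf{z}$ and the gains'' restricts $\theta$ to $(\epsilon',\nu]$ has no mechanism behind it. More importantly, the $(1-\nu)K_a$ abandoned users are treated as \emph{residual interference}: the relevant error event is $\bigl\|\mathbf{z}_1+\sum_{S_3^*}h_i\mathbf{c}^i-\sum_{S_2}h_i\mathbf{c}^{i'}\bigr\|_2^2\le\|\mathbf{z}_1\|_2^2$ with $\mathbf{z}_1=\mathbf{z}+\sum_{S_1\setminus S_3^*}h_i\mathbf{c}^i$, and after the Chernoff/Gaussian-MGF step this gives $\bigl(1+\tfrac{P'(\sum_{S_2}+\sum_{S_3^*})|h_i|^2}{4(1+P'\sum_{S_1\setminus S_3^*}|h_i|^2)}\bigr)^{-n}$. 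The interference sum in the denominator is what becomes $\xi(\psi+\theta,\psi+\theta+1-\nu)$, and that is why \eqref{P_tot_achi_CSIR} has a difference in its denominator rather than a single rate. Your pairwise analysis with $\mathbf{d}=\mathbf{v}(\boldsymbol{\beta}')-\mathbf{v}(\boldsymbol{\beta}^\star)$ against pure noise yields $(1+P'G/4)^{-n}$ with no interference term at all, so it cannot recover that structure.

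Relatedly, $\psi$ is not a Chernoff or large-deviations threshold and $\xi$ is not a Cram\'er rate. The paper sorts the active gains as $|h_1^\downarrow|\ge\cdots\ge|h_{K_a}^\downarrow|$ and uses the law-of-large-numbers fact $\tfrac{1}{K_a}\sum_{j=\lceil\varsigma K_a\rceil}^{\lceil\zeta K_a\rceil}|h_j^\downarrow|^2\to\xi(\varsigma,\zeta)$; $\psi\in[0,\nu-\theta]$ is simply the starting position in this sorted list when one chooses the subset $S_3^*\subset S_1$ (of size $t$) with the largest gains, and the worst case over where $S_1$ sits among the order statistics. So $\xi(\psi,\psi+\theta)$ is the limiting \emph{signal} power of the $t$ ``best'' mis-decoded users, $\xi(\psi+\theta,\psi+\theta+1-\nu)$ is the limiting \emph{interference} power of the next $(1-\nu)K_a$ users, and solving $1+\tfrac{P'K_a\,\xi(\psi,\psi+\theta)}{4(1+P'K_a\,\xi(\psi+\theta,\psi+\theta+1-\nu))}>e^{\gamma_\theta}$ for $P'K_a$ yields \eqref{P_tot_achi_CSIR} directly. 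Once you switch to the $\nu K_a$-output decoder and this order-statistic bookkeeping, the counting exponent $\gamma_\theta$ and the rest of your outline go through.
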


  \begin{proof}\label{proof_achi_CSIR}
    We use a random coding scheme by generating a Gaussian codebook for each UE with $\mathbf{c}_{m}^{j} \!\stackrel{\mathrm{i.i.d.}}{\sim} \! \mathcal{CN}\!\left(0,\!P'\mathbf{I}_{n}\right)$ and $P' \!<\! P$.
    The $j$-th UE sends $\mathbf{c}^{j}\!=\!\mathbf{c}_{W_{j}}^{j} \!1\!\left\{\!\left\|\mathbf{c}_{W_{j}}^{j}\right\|_{2}^{2} \!\leq\! n P\!\right\}$ if it is active.
    The decoder estimates the messages of $K_{a_1} \!= \!\nu K_a$ active UEs. Let $?$ denote an error symbol.
    The decoder outputs
      \begin{equation}
        \left[ \hat{\mathcal{K}}_{a_1}, \!\left( \hat{\mathbf{c}}^j \right)_{\!j\in \hat{\mathcal{K}}_{a_1} }\!\right] \!=\! \arg\!\!\!\!\!\!\min_{\stackrel{ \hat{\mathcal{K}}_{a_1} \subset \mathcal{K}} {\!\left| \hat{\mathcal{K}}_{a_1}\!\right| = K_{a_1}} }
        \!\!\!\!\min_{\left( \hat{\mathbf{c}}^j \in \mathcal{C}_j \right)_{j\in\hat{\mathcal{K}}_{a_1}}}
        \!\!\left\|\mathbf{y}- \!\!\!\!\sum_{j\in \hat{\mathcal{K}}_{a_1}}\!\!\!\!h_j \hat{\mathbf{c}}^{j} \right\|_{2}^{2}\!\!,\!
      \end{equation}
      \begin{equation}
        \hat{W}_j = \left\{
        \begin{array}{ll}
          f_{j}^{-1}\left(\hat{\mathbf{c}}^{j}\right) & j \in \hat{\mathcal{K}}_{a_1}  \\
          ? & j \notin \hat{\mathcal{K}}_{a_1}
        \end{array}\right. .
      \end{equation}

    We change the measure over which $\mathbb{E}$ in \eqref{PUPE} is taken to the one with $\mathbf{c}^{j}\!=\!\mathbf{c}_{W_{j}}^{j}$ at the cost of adding $ p_0 \!=\! K_a \mathbb{P}\!\left[\!\frac{\chi_{2}(2n)}{2 n}\!>\!\frac{P}{{P}^{\prime}}\!\right]$ \cite{A_perspective_on}.
    We have $ p_0 \!\to\! 0 \text { as } n \!\to\! \infty$.
    The averaged PUPE becomes
    \begin{equation}\label{PUPE_p0}
      P_{e} \leq p_0 + \mathbb{E}\!\left[\!\frac{1}{K_a} \!\sum_{j\in {\mathcal{K}_a}} \!1\! \left[ W_{j} \!\neq\! \hat{W}_{j} \right] \right]_{\text{\!new measure}} = p_0 + p_1.
    \end{equation}
    Next, we adopt the new measure and omit the subscript.
    Let $F_t\! =\!\! \left\{ \sum_{j\in {\mathcal{K}_a}} \!1\! \left[ W_{j}\! \neq \!\hat{W}_{j} \right] \!=\!K_{a,t} \!\right\}$, $K_{a,t} \!=\! {K_a \!-\!K_{a_1}\!+\!t}$,
    and $\mathcal{T} \!=\! (\epsilon' K_a, \nu K_a]\cap \mathbb{N}_{+}$.
    We can bound $p_1$ as
    \begin{equation}\label{PUPE_p1}
      p_{1} \leq \epsilon +  \mathbb{P} \!\left[ \bigcup_{t \in \mathcal{T}} \!F_t \right]
      \leq \epsilon +  \min \!\left\{\!1, \sum_{t\in \mathcal{T}} \mathbb{P} \!\left[ F_t\right]\!\right\}
      = \epsilon +  p_2.
    \end{equation}
    For simplicity, we rewrite ``$\bigcup_{{S_{1} \subset \mathcal{K}_a,} {\left| S_{1} \right| = K_{a,t}}} \!$'' to ``$\bigcup_{S_{1}}$'' and ``$\bigcup_{{S_{2} \subset \mathcal{K} \backslash \mathcal{K}_a\cup S_{1},} {\left| S_{2} \right| = t}}$'' to ``$\bigcup_{S_{2}}$''; similarly for $\sum$.
    We have
    \begin{align}\label{PUPE_p1_p2_ft}
      \mathbb{P} \!\left[ F_t | \mathbf{H}, \!\mathbf{c}_{[\mathcal{K}_a]}, \mathbf{z} \right]
      \! & \leq\! \mathbb{P}\! \!\left[
      \!\bigcup_{S_{1}} \bigcup_{S_{2}}
      \!\!\!\!\!\bigcup_{\stackrel{ \mathbf{c}^{i'} \!\in \mathcal{C}_i: }{ i \in S_{2}, \mathbf{c}^{i'} \!\!\neq \mathbf{c}^i }}
      \!\!\!\!\!\left\{ \!\left\| \mathbf{z} \!+ \!\! \sum_{i\in S_{1}}\!\!h_i \mathbf{c}^i \!\!- \!\!\sum_{i\in S_{2}}\!\!h_i \mathbf{c}^{i'} \!\right\|_2^2 \right.\right.
      \notag \\
      &\left.\left.\left. \;\;\;\;\;\;\;\leq\!
      \min_{\stackrel{S_{3} \subset S_{1}} {\left| S_{3} \right| = t}} \left\| \mathbf{z} \!+\!\! \!\! \sum_{i\in S_{1} \!\backslash  S_{3} }\!\!\! h_i \mathbf{c}^i \right\|_2^2 \!\right\}  \right| \!\mathbf{H},\! \mathbf{c}_{[\mathcal{K}_a]}\! , \mathbf{z} \right]\notag \\
      & \leq \!\sum_{S_{1}} \sum_{S_{2}}
      \!M^t \mathbb{P} \!\left[ F\!\left( S_{1}, S_{2}, S_{3}^{*}\right) \!| \mathbf{H}, \!\mathbf{c}_{[\mathcal{K}_a]}, \mathbf{z} \right]\!,
    \end{align}
    where $F\!\left( \;\!\!S_{1}\;\!\!, \;\!\!\;\!\!S_{2}\;\!\!, \;\!\!\;\!\!S_{3}^{*}\right) \!\!=\!\! \left\{\!\left\| \mathbf{z}_1 \!\!+\!\! \sum_{i\in \;\!\!S_{3}^{*}}\!\!h_i \mathbf{c}^i \;\!\!\!\;\!\!-\!\!\;\!\! \sum_{i\in \;\!\!S_{2}}\!\!h_i \mathbf{c}^{i'} \!\right\|_2^2 \!\!\;\!\!\leq\;\!\!\!\;\!\!
    \left\| \mathbf{z}_1 \!\right\|_2^2 \!\right\}$ with
    $\mathbf{z}_1 \!=\! \mathbf{z} \!+\! \sum_{i\in S_{1} \! \backslash  S_{3}^{*} } \!h_i \mathbf{c}^i$,
    $\mathbf{c}_{[\mathcal{K}_a]} \!= \!\left\{\mathbf{c}^{i}\!\!: i \!\in\! \mathcal{K}_a \right\}$,
    and $S_{3}^{*}\! \subset \!S_{1}$ is a possibly random subset of size $t$.

    To further bound \eqref{PUPE_p1_p2_ft}, for $\mathbf{a}\sim\mathcal{CN}(0,\mathbf{I}_{n})$, $b\in\mathbb{C}$, $\mathbf{u}\in\mathbb{C}^{n}$, $\gamma>-\frac{1}{\left|b\right|^{2}}$, and $\phi={1+\gamma\left|b\right|^{2}}$, we utilize the identity   \cite{improved_bound}
    \begin{equation}\label{identity_exp}
      \mathbb{E}\left[\exp\left\{-\gamma\|b\mathbf{a}+\mathbf{u}\|_{2}^{2}\right\}\right]
      ={\phi^{-n}}{\exp\left\{-\frac{\gamma} {\phi} {\left\|\mathbf{u}\right\|_{2}^{2}}\right\}}.
    \end{equation}
    The Chernoff bound is also utilized for any random variable $U$, i.e., $\mathbb{P}\left( U\geq d\right) \!\leq\! \min_{\lambda\geq 0} \exp\left\{-\lambda d \right\} \mathbb{E}\left[ \exp\left\{ \lambda U\right\} \right]$ \cite{elements_IT}.
    Hence, let $\lambda_2 = 1+\lambda_1 P' \sum_{i\in S_{2}} \left|h_i  \right|^{2}$, and we have
    \begin{align}\label{PUPE_p1_p2_ft_A}
      &\;\;\;\;\mathbb{P} \!\left[ F\!\left( S_{1}, S_{2}, S_{3}^{*}\right) \!| \mathbf{H}, \mathbf{c}_{[\mathcal{K}_a]}, \mathbf{z} \right] \notag\\
      &\leq\!  \min_{\lambda_1\geq0}
      \left(\lambda_2\right)^{\!-n}
      \!\exp\!\left\{\!\lambda_1 \!\left\|\mathbf{z}_1 \!\right\|_{2}^{2}
      \!-\!\!\frac{\lambda_1}{\lambda_2}  \!\left\| \mathbf{z}_1 \!+\! \sum\nolimits_{i\in S_{3}^{*}}\!\!h_i \mathbf{c}^i \right\|_{2}^{2} \!\right\}\!.\!
    \end{align}
    Taking expectation over $\mathbf{c}_{\left[S_{3}^{*}\right]}$ and $\mathbf{z}_1$, respectively, we have
    \begin{equation}
      \mathbb{P} \!\left[ F\!\left( S_{1},S_{2}, S_{3}^{*}\right) \!| \mathbf{H} \right]
      \!= \!\!\left( \!1\!+\! \frac{P'\!\left( \sum_{i\in S_{2}}\!\! \left|h_i  \right|^{2} \!\!+\!\! \sum_{i\in S_{3}^{*}} \!\! \left|h_i \right|^{2}\!\right)}{4 \!\left( 1\!+\!P' \! \sum_{i\in S_{1}\backslash S_{3}^{*}}\! \left|h_i  \right|^{2} \right) }\! \right)^{\!\!\!\!-n}\!.
    \end{equation}
    We sort $\left\{h_i\!:\!i\!\in\!\mathcal{K}_a\right\}$ in decreasing order of fading power as $\left|{h}_{1}^{\downarrow}\right|\!\! \geq\!\! \left|{h}_{2}^{\downarrow}\right|\!\! \geq \!\!\ldots \!\!\geq \!\! \left|{h}_{K_a}^{\downarrow}\!\right|$.
    Let $ \Psi_{n} \!\!=\! \left[0, \nu\!-\!\theta\right]\!\cap\! \left\{\! \frac{i}{K_a}\!\!: \!i\!\in\! [{K}_a] \!\right\}$.
    Choosing $S_{3}^{*} \!\subset\! S_{1}$ to contain indices with top $t$ fading power, we can obtain
    \begin{align}\label{PUPE_p1_p2_ft2}
      &\mathbb{P} \!\left[ F_t  | \mathbf{H}\right]  \!\leq   \!{\binom {K_a} {K_{a,t}}} \!{\binom {K \!-\!K_{a_1}\!+\!t} {t}}
      M^t \notag \\
      & \;\;\;\;\;\;\;\;\;\;\cdot\! \!
      \left( \!\! \min_{\psi \in \Psi_n}
      \!\!\left\{\!\!1\!+\! \frac{ P' \! \sum_{i=\psi K_a+1}^{\psi K_a+t} \! \left|h_i^{\downarrow}  \right|^{2}}{\!4 \!\left(\!\! 1\!+\!\!P' \! \sum_{i=\psi K_a\!+t+1}^{\psi K_a\!+t+\!K_a\!-\!K_{a_1}}\!\! \left|h_i^{\downarrow}  \!\right|^{2} \right) }\!\! \right\}\!\right)^{\!\!\!\!-n} \!\!.\!
    \end{align}

    Let $\Theta_n \!\!=\!\! (\epsilon'\!, \nu]\!\cap\! \left\{\! \frac{i}{K_a}\!\!: \!i\!\in\! [{K}_a] \!\right\}$ and $t \!=\! \theta K_a$.
    When $\theta \!=\! \nu$, ${\binom {K_a} {K_{a\!,t}}} \!=\! 1$.
    For $\theta \!\in\! \Theta_n\!\backslash \{\nu\}$, we have \cite{Gallager}
    \begin{equation}\label{C1}
      {\binom {K_a} {K_{a,t}}} \!\!\leq\!\! \sqrt{\!\frac{1}{\!2\pi \!K_a(1\!-\!\nu\!+\!\theta) (\nu\!-\!\theta)}} \!\exp\!\left\{\!K_a h(1\!-\!\nu\!+\!\theta)\!\right\}\!.
    \end{equation}
    Let $K_t =  K -K_{a_1}+t$. Similarly, for $\theta \in \Theta_n $, we have
    \begin{equation}\label{C2}
      {\binom {K_t} {t}}
      \!\!\leq\! \sqrt{\!\frac{{1} +\theta{p_a} - \nu{p_a}}{2\pi\theta K_{a}\!\left( {1}\!-\! \nu{p_a} \right)}}
      \!\exp\!\left\{\!K_t h\!\left(\!\frac{\theta{p_a}}{{1}\! +\!\theta{p_a}\!\!-\! \nu{p_a}}\!\right)\!\right\}\!.
    \end{equation}
    For $\tau\!>\!0$ and $0\!<\!\varsigma,\!\zeta \!<\!\!1$, with probability $1\!-\exp\!\left\{\!-\mathcal{O}(n^{\tau}) \right\}$, we have
    $\frac{1}{K_a}\sum_{j=\lceil \varsigma K_a\rceil}^{\lceil \zeta K_a\rceil}\! \left|{{h}^{\downarrow}_{j}} \right|^{2}
      =  \xi \!\left( \varsigma, \zeta \right) + o(1)$ \cite{finite_payloads_fading}.
    We define the event $L_{n}$ with $\mathbb{P}\left[ L_{n}^{c}\right]$ exponentially small in $n$
    \begin{align}\label{event_Ln}
      L_{n}\!\! =\! &\!\bigcap_{\psi \in \Psi_{\!n}}\!\! \left\{\! \!\left\{ \!\!\frac{1}{K_a}\!\!\! \sum_{j= (\psi+\theta) K_a+1}^{ (\!\psi+\theta+\!1\!-\!\nu)\! K_a}\!\! \!\!\!\left|{{h}^{\downarrow}_{j}} \!\right|^{2}
      \!\!\!\!= \! \xi \!\left( \psi\!+\!\theta , \psi\!+\!\theta\!+\!\!1\!\!-\!\nu \right) \!+\! o(1)\!  \!\right\}\!\right. \notag\\
      & \;\;\;\;\;\;\;\;\;\;\bigcap \!\left.\left\{ \!\! \frac{1}{K_a}\!\! \sum_{j= \psi K_a\!+1}^{ (\psi+\theta) K_a}\!\! \left|{{h}^{\downarrow}_{j}} \right|^{2}
      \!\!=\!  \xi \!\left( \psi , \psi\!+\!\theta \right) \!+\! o(1)\!\!\right\}\! \!\right\}\!.\!\!
    \end{align}

    Let $\kappa = 1-\nu p_a+\theta p_a$. We can bound $p_2$ as
    \begin{align}\label{PUPE_p1_p2lim}
      p_2\!&\leq \mathbb{E}\!\left[\min \!\left\{1,\sum_{t\in\mathcal{T}}\mathbb{P} \!\left[ F_t  | \mathbf{H}\right] \right\} 1\left[ L_n\right] \right] + \mathbb{P}\left[ L_n^{c}\right]\notag \\
      & \leq \min \!\left\{\;\!\! 1, \!\sum_{\theta \;\!\!\in\;\!\! \Theta_{\;\!\!n}\!}
      \!\exp\!\left\{\!o(n) \!-\!n \!
      \left(\!-\mu\kappa h\!\left(\!\frac{\theta{p_a}}{\kappa}\!\right)
      \!-\! \theta p_a \mu \ln \!M
       \right. \right. \right.\notag \\
      & \;\;\;\;\;\;\;\;\;\;\;\;\;+ \!\min_{\psi \in\;\!\! \Psi_{\;\!\!n}}\!\ln \!\left(
      \!1\!+\! \frac{ P'\!K_a \xi \!\left( \psi , \psi\!+\!\theta \right) }{4 \left( 1\!+\!P'\!K_a  \xi \left( \psi\!+\!\theta , \;\!\!\psi\!+\!1\!-\!\nu\!+\!\theta \right) \right) }\right) \notag \\
      &  \;\;\;\;\;\;\;\;\;\;\;\;\left.\left.\left. - p_a \mu h\!\left( 1-\nu+\theta \right) \right)\right\} \right\} + o(1).
    \end{align}
    Define $\Theta =  (\epsilon', \nu]$ and $\Psi =  [0, \nu-\theta]$. Choosing $K_aP' > \sup_{\theta\in \Theta} \! \sup_{\psi\in \Psi} \! P_{tot,a}^{'}\!\left(\theta,\!\psi\right)$ will ensure $\limsup_{n\to\infty}\! p_2 \!=\! 0$.
  \end{proof}

\subsection{No-CSI} \label{section3_sub2}
  In this section, we assume neither the transmitters nor the decoder knows the realization of fading coefficients, but they both know the fading distribution. In this case, we can obtain:
  \begin{prop}\label{prop_achi_noCSI}
    Fix spectral efficiency $S$ and target PUPE $\epsilon$.
    With no-CSI, if $\varepsilon > \varepsilon_{no-CSI}^{*}  = \sup_{\theta\in(\epsilon,1]}  \frac{P_{tot,a}^{'}(\theta)}{S}$,
    there exists a sequence of $(n,\! M\!, \epsilon_n, \varepsilon, K, \!K_a)$ codes such that $\limsup_{n\to\infty}\!\epsilon_n\!\leq\!\epsilon$, where
    \begin{equation}\label{P_tot_achi_noCSI}
      P_{tot,a}^{'}\left(\theta \right) = \frac{W_{\theta}}{\left(1-\delta_{3}^{*}\right)\xi\left( 1-\theta,1\right)},
    \end{equation}
    \begin{equation}\label{W_achi_noCSI}
      W_{\theta} = \frac{1\!-\!V_{\theta}}{V_{\theta}}(1+\delta_{2,\theta}) ,
    \end{equation}
    \begin{align}\label{V_achi_noCSI}
      {V}_{\theta} \!&=\! \exp \!\left\{ \!-\! \left( \!\delta_{1,\theta}^*
      \!+\! \frac{1-p_a \mu +\theta p_a \mu}{1- p_a \mu}h\left( \frac{\theta p_a \mu}{1-p_a \mu +\theta p_a \mu}\right) \right.\right. \notag\\
      & \;\;\left.\left. \!+ \frac{ \theta p_a \mu \ln M}{1-p_a\mu}
      \!+\! \frac{\mu(1\!-\!p_a \!+\!\theta p_a )}{1- p_a \mu} h\!\left( \frac{\theta p_a}{\!1 \!-\! p_a \!+\!\theta p_a\!} \right) \!\right)\!\right\}\! ,
    \end{align}
    \begin{equation}\label{delta_achi_noCSI}
      \delta_{1,\theta}^* = \frac{p_a \mu}{1-p_a \mu} h(\theta),
    \end{equation}
    \begin{equation}\label{c_achi_noCSI}
      c_\theta = \frac{2 V_{\theta}}{1-V_{\theta}} ,
    \end{equation}
    \begin{equation}\label{q_achi_noCSI}
      q_\theta = \frac{ p_a \mu }{1-p_a \mu+ \theta p_a \mu} h(\theta),
    \end{equation}
    \begin{equation}\label{delta1_achi_noCSI}
      \delta_{2,\theta}^* = q_\theta\left( 1+c_\theta\right) + \sqrt{q^2_\theta c_\theta \left( 2+c_\theta\right) + 2q_\theta\left( 1+c_\theta\right)},
    \end{equation}
    \begin{equation}\label{delta2_achi_noCSI}
      \delta_{3}^* \!=\! \inf \!\left\{ x\!: 0<x<1, -\ln(1-x)-x > 0\right\},
    \end{equation}
  where $\xi(\cdot,\cdot)$ is given in \eqref{xi_achi_CSIR}.
  Hence, $\varepsilon^{*} (M, \mu, p_a, \epsilon) \!\leq\! \varepsilon_{no-CSI}^{*}$.
  \end{prop}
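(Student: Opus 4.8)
The plan is to follow the skeleton of the CSIR argument in Proposition~\ref{prop_achi_CSIR}, the only essential modification being that the coherent Euclidean decoder is replaced by a non-coherent subspace decoder using solely the fading distribution. First I would keep the random ensemble and the truncation trick verbatim: generate an individual i.i.d.\ codebook with $\mathbf{c}^j_m \sim \mathcal{CN}(0,P'\mathbf{I}_n)$, $P'<P$, let each active UE transmit $\mathbf{c}^j_{W_j}\,1\{\|\mathbf{c}^j_{W_j}\|_2^2 \le nP\}$, and change the measure as in \eqref{PUPE_p0} to drop the indicator at the additive cost $p_0 = K_a\,\mathbb{P}[\chi_2(2n)/(2n) > P/P'] \to 0$. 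Under the untruncated measure the receiver sees $\mathbf{y}$ as in \eqref{receive_y1} with $\mathbf{H}$ unknown; since each UE's codewords essentially span a fresh line and, in the scaling $nP'=\Theta(1)$, the Gaussian likelihood of a size-$K_a$ hypothesis reduces up to hypothesis-independent terms to the energy it fails to capture, I would let the decoder output the support $\hat{\mathcal{K}}_a$ of size $K_a$ and codeword tuple $(\hat{\mathbf{c}}^j)$ minimizing $\|\mathcal{P}^{\bot}_{\operatorname{span}(\hat{\mathbf{c}}^j:\, j\in\hat{\mathcal{K}}_a)}\mathbf{y}\|_2^2$, which requires no knowledge of $\mathbf{H}$; this is the no-CSI counterpart of the Euclidean minimization in Proposition~\ref{prop_achi_CSIR}, now decoding all $K_a$ active UEs.

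Second, I would decompose the error exactly as in \eqref{PUPE_p1}--\eqref{PUPE_p1_p2_ft}: with $F_t = \{\#\{j:W_j\ne\hat W_j\}=t\}$ for $t\in\mathcal{T}=(\epsilon K_a,K_a]\cap\mathbb{N}_+$, the new-measure PUPE is at most $\epsilon+\mathbb{P}[\bigcup_t F_t]$, and $\mathbb{P}[F_t]$ is union-bounded over the set $S_1\subset\mathcal{K}_a$ of $t$ dropped true UEs, the set $S_2\subset(\mathcal{K}\setminus\mathcal{K}_a)\cup S_1$ of $t$ spuriously decoded UEs (allowing $S_2\cap S_1\ne\emptyset$, i.e.\ a true UE decoded to a wrong codeword), and the $M^t$ wrong-codeword assignments on $S_2$. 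Bounding the resulting Gallager-type multiplicity $\binom{K_a}{t}\binom{K-K_a+t}{t}M^t$ as in \eqref{C1}--\eqref{C2} and normalizing exponents by $n-K_a=(1-p_a\mu)n$, the $\binom{K_a}{t}$ factor yields $\delta^*_{1,\theta}$ of \eqref{delta_achi_noCSI}, the $M^t$ factor the term $\theta p_a\mu\ln M/(1-p_a\mu)$, and the $\binom{K-K_a+t}{t}$ factor the last $h(\cdot)$ term inside $-\ln V_\theta$ of \eqref{V_achi_noCSI}; the remaining $h(\cdot)$ term of \eqref{V_achi_noCSI} will be produced by the subspace estimate below.

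Third, for a single bad hypothesis I would project $\mathbf{y}$ onto the orthogonal complement of the $K_a-t$ retained true codewords, which (using $\|\mathbf{c}^i_m\|_2^2\ge(1-\delta_3^*)nP'$ uniformly over codewords — the codeword-norm slack $\delta_3^*$ of \eqref{delta2_achi_noCSI} being the $\chi_2(2n)$ lower-tail rate condition — together with near-orthogonality of the Gaussian ensemble) removes their contribution and leaves $\mathbf{v}$ whose energy $\|\mathbf{v}\|_2^2$ is a non-central $\chi'_2\!\big(2(n-K_a+t),\,2\sum_{i\in S_1}|h_i|^2\|\mathbf{c}^i\|_2^2\big)$-type variable of mean $\approx (n-K_a+t)+nP'\sum_{i\in S_1}|h_i|^2$; projecting out the $t$ random $S_2$-codewords from $\mathbf{v}$ leaves $\|\mathbf{v}\|_2^2(1-B)$ with $B\sim\beta(t,n-K_a)$ (conditionally independent of $\|\mathbf{v}\|_2^2$), while the true hypothesis attains $\approx\|\mathcal{P}^{\bot}_{\mathrm{true}}\mathbf{z}\|_2^2$, a $\chi_2(2(n-K_a))$-type variable. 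The confusion event is thus $B\ge 1-\|\mathcal{P}^{\bot}_{\mathrm{true}}\mathbf{z}\|_2^2/\|\mathbf{v}\|_2^2$; choosing $S_1$ to be the $t$ weakest active UEs is the worst case since the energy metric cannot exploit $\mathbf{H}$, and on a high-probability event analogous to \eqref{event_Ln} (from the order-statistics concentration of \cite{finite_payloads_fading}) one has $\sum_{i\in S_1}|h_i|^2\approx K_a\,\xi(1-\theta,1)$. A Chernoff bound on the upper tail of $B$ — via the Beta--Binomial identity, $\mathbb{P}[B\ge\tau]\lesssim\binom{n-K_a+t}{t}\tau^t(1-\tau)^{n-K_a}$, which contributes the remaining $h(\cdot)$ term of $-\ln V_\theta$ — together with the non-central $\chi'_2$ concentration of $\|\mathbf{v}\|_2^2$, whose slack solves the quadratic \eqref{delta1_achi_noCSI} with $q_\theta,c_\theta,V_\theta$ as in \eqref{q_achi_noCSI},\eqref{c_achi_noCSI},\eqref{V_achi_noCSI} and yields $\delta^*_{2,\theta}$, gives the exponential bound on $\mathbb{P}[F_t]$.

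Finally, collecting exponents, the product of the multiplicity and the Beta tail equals $V_\theta^{-(n-K_a)}$ times a signal-dependent factor $\tau^t(1-\tau)^{n-K_a}$ with $\tau\approx 1-(n-K_a)/\big((n-K_a+t)+(1-\delta_3^*)nP'K_a\,\xi(1-\theta,1)\big)$; requiring its exponent to be strictly negative for every $\theta\in(\epsilon,1]$ and solving for the admissible per-dimension signal level gives $K_aP'>\sup_{\theta\in(\epsilon,1]}P_{tot,a}'(\theta)$ with $P_{tot,a}'(\theta)$ as in \eqref{P_tot_achi_noCSI}--\eqref{delta2_achi_noCSI}, so that $\limsup_n\mathbb{P}[\bigcup_t F_t]=0$; letting $P'\uparrow P=\varepsilon k/n$ and recalling $S=p_a\mu k$ yields $\varepsilon^*(M,\mu,p_a,\epsilon)\le\varepsilon_{no-CSI}^*$. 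I expect the main obstacle to be the third step: showing rigorously that the non-coherent residual is governed by a $\beta(t,n-K_a)$ variable and a non-central $\chi'_2$ energy in the weak-signal regime $nP'=\Theta(1)$, and extracting large-deviation exponents sharp enough to dominate the $M^t\binom{K_a}{t}\binom{K-K_a+t}{t}$ multiplicity, while keeping the three slacks $\delta^*_{1,\theta},\delta^*_{2,\theta},\delta_3^*$ in exactly the places they occupy in \eqref{P_tot_achi_noCSI}--\eqref{delta2_achi_noCSI}; handling the case $S_2\cap S_1\ne\emptyset$ inside the same union bound also needs some care.
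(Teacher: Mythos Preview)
Your proposal is essentially the paper's own proof: the same Gaussian ensemble with truncation, the same projection (subspace) decoder, the same $(S_1,S_2,M^t)$ union bound on $F_t$, the same $\beta(t,n-K_a)$ law for the random-subspace residual (the paper writes the pairwise error as $F_\beta(G_{S_1};n-K_a,t)$ with $G_{S_1}=\|\mathcal{P}^{\bot}_{\mathbf{c}_{[\mathcal{K}_a]}}\mathbf{z}\|_2^2\big/\|\mathcal{P}^{\bot}_{\mathbf{c}_{[\mathcal{K}_a\backslash S_1]}}\mathbf{y}\|_2^2$ and bounds it by $\binom{n_t-1}{t-1}G_{S_1}^{n-K_a}$), and the same cascade of events $L_1,\ldots,L_4$ handled by $\chi'_2$ and $\chi_2$ concentration plus the order-statistics limit $\xi(1-\theta,1)$. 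One small correction: $\delta_3^*$ is not a slack on individual codeword norms $\|\mathbf{c}^i_m\|_2^2$ but on the event $L_3=\{\chi_2(2n_t)/(2n_t)\ge 1-\delta_3\}$, which appears because, conditioned on $\mathbf{H}$, the projected signal satisfies $\bigl\|\mathcal{P}^{\bot}_{\mathbf{c}_{[\mathcal{K}_a\backslash S_1]}}\sum_{i\in S_1}h_i\mathbf{c}_i\bigr\|_2^2 \sim \tfrac{P'}{2}\sum_{i\in S_1}|h_i|^2\,\chi_2(2n_t)$; this is where the factor $(1-\delta_3^*)$ in \eqref{P_tot_achi_noCSI} actually enters.
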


  \begin{proof}\label{proof_achi_noCSI}
    We assume each UE has a Gaussian codebook with power $P' \!<\! P$. The $j$-th UE transmits $\mathbf{c}^{j}$ if it is active.
    We utilize the projection decoder based on the fact that $\mathbf{y}$ in \eqref{receive_y1} belongs to the subspace spanned by the transmitted signals if the additive noise is neglected \cite{Beta_dis}.
    The output is given by
      \begin{equation}\label{decoder_achi_noCSI}
        \left[ \hat{\mathcal{K}}_a, \!\left( \hat{\mathbf{c}}^j \!\;\!\right)_{\!j\!\;\!\in\!\;\! \hat{\mathcal{K}}_a }\!\right] \!\!=\! \arg\!\!\!\!\!\!\!\!\max_{{ \hat{\mathcal{K}}_a \subset \mathcal{K}}, {\left| \hat{\mathcal{K}}_a\!\right| = K_a} } \max_{\left( \hat{\mathbf{c}}^j \!\;\!\in \!\;\!\mathcal{C}_j \!\right)_{\!j\!\;\!\in\!\;\!\hat{\mathcal{K}}_{\!\;\!a}}} \!\!\!\left\| \mathcal{P}_{\!\left\{\!\hat{\mathbf{c}}^j\!: j\in \hat{\mathcal{K}}_a \!\right\}} \!\mathbf{y}\right\|_2^2\!\!,\!\!
      \end{equation}
      \begin{equation}\label{decoder_W_achi_noCSI}
        \hat{W}_j = \left\{
        \begin{array}{ll}
          f_{j}^{-1}\left(\hat{\mathbf{c}}^{j}\right) & j \in \hat{\mathcal{K}}_a  \\
          ? & j \notin \hat{\mathcal{K}}_a
        \end{array}\right. .
      \end{equation}
    
    As in Section \ref{section3_sub1}, we change the measure and obtain \eqref{PUPE_p0}. Next, we bound $p_1$ as in \eqref{PUPE_p1} with $p_2 \!=\!\min\left\{1, \!\sum_{t\in \mathcal{T}} \!\mathbb{P} \left[ F_t\right]\right\}$,
    $F_t\! =\! \left\{\! \sum_{j\in {\mathcal{K}_a}} \!\!1\! \left[ W_{j}\! \neq \!\hat{W}_{j} \right] \!=\! t \!\right\}$, and $\mathcal{T} \!\!=\!\! (\epsilon K_a, K_a]\cap \mathbb{N}_{+}$.
    Let $F\!\left( S_{1}, S_{2}\right) \!=\! \left\{ \left\| \mathcal{P}_{\!\mathbf{c}_{\left[S_{2}\right]}^{'}, \mathbf{c}_{ \left[\mathcal{K}_a \!\backslash S_{1}\right] } } \mathbf{y} \right\|_2^2 \!\geq\! \left\| \mathcal{P}_{\!\mathbf{c}_{ \left[\mathcal{K}_a \right] } } \mathbf{y} \right\|_2^2 \right\}$.
    We rewrite ``$\bigcup_{{S_{1} \subset \mathcal{K}_a},{\left| S_{1} \right| = t}}$'' to ``$\bigcup_{S_{1}}$'' and ``$\bigcup_{{S_{2} \subset \mathcal{K} \backslash \mathcal{K}_a\cup S_{1}}, {\left| S_{2} \right| = t}}$'' to ``$\bigcup_{S_{2}}$''; similarly for $\sum$.
    We have
    \begin{equation}\label{PUPE_p1_p2_ft_noCSI}
      \mathbb{P} \!\left[ F_t | \mathbf{H}, \!\mathbf{c}_{[\mathcal{K}_a]}, \mathbf{z} \right]
      \! \!=\! \mathbb{P}\! \!\left[\! \left.
      \bigcup_{S_{1}}
      \bigcup_{S_{2}}
      \!\!\!\!\!\!\bigcup_{\stackrel{ \mathbf{c}^{i'} \!\in \mathcal{C}_i: }{ i \in S_{2}, \mathbf{c}^{i'} \!\neq \mathbf{c}^i }}
      \!\!\!\!\!\!\!\!\!\! F\!\left( S_{1}, S_{2}\right)  \right| \!\mathbf{H}, \mathbf{c}_{[\mathcal{K}_a]}, \mathbf{z} \right]\!\!.\!
    \end{equation}

    Let $A_1 \!=\! \mathbf{c}_{ \left[\mathcal{K}_a \!\backslash S_{1}\right] }$, $B_1 \!=\! \mathbf{c}_{\left[S_{2}\right]}^{'}$, and $V \!=\! \operatorname{span}\{A_1,B_1\} \!=\! A \oplus B$ where $A$ and $B$ are subspaces of dimension $K_a\!- t$ and $t$ respectively, with $A\!=\! \operatorname{span}(A_1)$ and $B$ is orthogonal complement of $A_1$ in $V$.
    Hence, $\left\|\mathcal{P}_{V} \mathbf{y}\right\|_2^2 \!=\! \left\|\mathcal{P}_{\!A_1} \mathbf{y}\right\|_2^2  +  \left\|\mathcal{P}_{\!B}\mathcal{P}_{\!A_1}^{\bot} \mathbf{y}\right\|_2^2 $.
    Denote $n_t \!= n\!-\!K_a\!+t$. 
    Conditioned on $\mathbf{H}$, $\mathbf{c}_{[\mathcal{K}_a]}$, and $\mathbf{z}$, the law of
    $\left\|\mathcal{P}_{\!B} \mathcal{P}_{\!A_1}^{\bot} \mathbf{y}\right\|_2^2$ is the law of squared length orthogonal projection of a fixed vector in $\mathbb{C}^{n_t}$ of length $\left\|\mathcal{P}_{\!A_1}^{\bot} \mathbf{y}\right\|_2^2$ onto a (uniformly) random $t$ dimensional subspace.
    It is the same as the law of squared length orthogonal projection of a random vector of length $\left\|\mathcal{P}_{\!A_1}^{\bot} \mathbf{y}\right\|_2^2$ in $\mathbb{C}^{n_t}$ onto a fixed $t$ dimensional subspace, i.e., $\left\|\mathcal{P}_{\!A_1}^{\bot} \mathbf{y}\right\|_2^2 \beta(t,n\!-\!K_a)$ \cite{Beta_dis}.
    We have
    \begin{equation}\label{PUPE_p1_p2_fst_noCSI}
      \mathbb{P} \left[ \left.
      F\left( S_{1}, S_{2}\right)  \right| \mathbf{H}, \mathbf{c}_{[\mathcal{K}_a]}, \mathbf{z} \right]
      =F_{\beta}(G_{S_{1}}; n-K_a, t),
    \end{equation}
    where $ G_{S_{1}} \!\!=\! { \left\|\mathcal{P}_{\mathbf{c}_{[\mathcal{K}_a]}}^{\bot}\mathbf{z} \right\|_2^2 }\Big/
      {\left\|\mathcal{P}_{\mathbf{c}_{[\mathcal{K}_a \!\backslash S_{1}]}}^{\bot}\mathbf{y} \right\|_2^2 }$.
    $F_{\beta}(G_{S_{1}}; n-K_a, t)$ denotes the CDF of beta distribution with parameters $n-K_a$ and $t$ satisfying $F_{\beta}(G_{S_{1}}; n-K_a, t)\!\leq\! {\binom {n_t-1} {t-1}} G_{S_{1}}^{n-\!K_a}$ for $t \!\geq \!1$.

    Denote $K_t = K-K_a+t$. Then, \eqref{PUPE_p1_p2_ft_noCSI} can be bounded as
    \begin{equation}\label{PUPE_p1_p2_ftb_noCSI}
      \mathbb{P} \!\left[ F_t | \mathbf{H},  \mathbf{c}_{[\mathcal{K}_{a} ]}, \mathbf{z} \right]
      \! \leq \!\min \!\left\{\!1, \sum_{S_{1}} {\!\binom {K_t} {t}}  M^t \!{\binom {n_t - 1} {t -1}} G_{S_{1}}^{n-K_a}\!\right\} \!.
    \end{equation}
    Let $t \!=\! \theta K_a$ and $ \Theta_n \!=\! (\epsilon, 1]\!\cap \!\left\{ \!\frac{i}{K_a}\!: i\!\in\! [{K}_a] \right\}$.
    We can bound ${\binom {K_a} {t}} $ and ${\binom {K_t} {t}}$ based on \cite{Gallager}. Meanwhile, we have
    \begin{equation}
      {\binom {n_t-1} {t-1}}\!\leq\!
      \sqrt{\frac{n_t-1}{2\pi(t-1) \left( n-\!K_a \right)}} \exp \!\left\{n_t \;\!h\!\left({\frac{t}{n_t}}\right)\right\}.
    \end{equation}

    We denote $r_{\theta} \! =\! \frac{ \theta p_a \mu \ln M}{1- p_a\mu} 
    \!+\! \frac{1-p_a \mu +\theta p_a \mu}{1- p_a \mu}h\left( \frac{\theta p_a \mu}{1-p_a \mu +\theta p_a \mu}\right)
    \!+ \! \frac{\mu-p_a \mu +\theta p_a \mu}{1- p_a \mu} h\!\left( \frac{\theta p_a}{1 - p_a +\theta p_a} \right)    
    \!+\! \frac{\ln \left({\frac{1 +\theta p_a - p_a}{2\pi\theta K_a\!\left( 1 - p_a \right)} } \right)}{2(n-K_a)}
    \!+\! \frac{\ln \left( \frac{n_t-1}{2\pi(t-1)\!\left( n-K_a \right)} \right)}{2(n-K_a)} $,
    $\tilde{V}_{n,{\theta}} \!=\! r_{\theta} + \delta_{1,\theta}$ with $\delta_{1,\theta}\!>\!0$, and ${V}_{n,{\theta}} \!=\! \exp\!\left\{ \!-\tilde{V}_{n,{\theta}}\!\right\}$.
    We have $\lim_{n\to\infty} \!{V}_{n,\theta} \!=\! {V}_{\theta} $ as in \eqref{V_achi_noCSI}.
    Define the event
    $ L_{1} \!=\! \bigcap_{t \in \mathcal{T}}  \bigcap_{S_{1}} \left\{G_{S_{1}}\!\leq\! V_{n, \theta}\right\}$.
    Then, $p_{2}$ can be bounded as
    \begin{align}\label{PUPE_p1_p2b_noCSI}
      p_{2} \!&\leq\! \mathbb{E} \! \left[ \!\min\!\left\{\!1, \!\sum_{t \in \mathcal{T}} \sum_{S_{1}}
      \!\exp\!\left\{ (n\!-\!\!K_a) r_{\theta} \right\} \!G_{S_{1}}^{n\!-\!K_a} \! \!\right\}\! 1\!\left[ L_{1} \right] \right]\!\!+\! \mathbb{P}\!\left[ L_{1}^c \right] \notag \\
      & \leq \!\!\sum_{\theta \in \Theta_n \backslash \{1\}} \!\!\!\!\!\! \exp\!\left\{\!K_a h(\theta) \!-\! (n\!-\!\!K_a)\delta_{1,\theta} \!-\! \frac{1}{2}\!\ln\! \left(2\pi\theta K_a(1\!-\!\theta) \right)\!\right\}\notag \\
      &\;\;\;\;+ \exp\!\left\{- n(1\!-\!p_a \mu)\delta_{1,\theta=1}\right\}
      + \mathbb{P}\left[ L_{1}^c \right].
    \end{align}

    Let $p_{3} = \mathbb{P}\left[ L_{1}^c \right]$, which can be bounded as
    \begin{align}\label{PUPE_p1_p2_p3a_noCSI}
      p_{3}\!&\leq \!\mathbb{P} \!\left[\! \bigcup_{t\in \mathcal{T}} \bigcup_{S_{1}}
      {\left\|\mathcal{P}^{\bot}_{\!\mathbf{c}_{[\mathcal{K}_a \!\backslash\! S_{1}]}}\!\mathbf{z} \right\|_2^2 }
      \!\!>\! V_{n, t} {\left\| \mathcal{P}^{\bot}_{\!\mathbf{c}_{[\mathcal{K}_a \!\backslash \!S_{1}]}} \!\!\left( \sum_{i\in S_{{1}}}\!\!{h_i \mathbf{c}_i  } \!+\! \mathbf{z} \! \right)\!\right\|_2^2 } \right] \notag \\
      & =\! \mathbb{P} \!\!\left[\!\bigcup_{t \in \mathcal{T}} \! \bigcup_{S_{1}}
      \!\left\|Q_{S_{1}}\!\right\|_2^2
      \!>\!\! \frac{V_{n, t}}{\left( 1\!\!-\!\!V_{n,t} \right)^2} {\left\| \mathcal{P}^{\bot}_{\!\mathbf{c}_{[\mathcal{K}_a \! \backslash \! S_{1}]}} \!\! \sum_{i\in S_{1}} \!\!{h_i \mathbf{c}_i  } \right\|_2^2 } \right]\!.\!
    \end{align}
    where $Q_{S_{1}}\!\!\!= \!\! \mathcal{P}^{\bot}_{\mathbf{c}_{[\mathcal{K}_a \! \backslash \! S_{1}]}} \!\!\left( \!\mathbf{z} \!-\! \frac{V_{n, t}}{1-V_{n,t}} \! \sum_{i\in S_{1}} \!\!{h_i \mathbf{c}_i  } \right)$.
    Conditioned on $\mathbf{H}$ and $\mathbf{c}_{\left[\mathcal{K}_a \right]}$, we have ${\left\|Q_{S_{1}}\!\right\|_2^2 } \!\sim\! \frac{1}{2} \chi'_2 \!\left( 2\lambda, 2n_t\right)$ with conditional expectation $\mu \!=\! \lambda\!+\!n_t$ and    
    $\lambda \!=\! {\left\| \! \frac{V_{n, t}}{1-V_{n,t}} \!\mathcal{P}^{\bot}_{\mathbf{c}_{[\mathcal{K}_a \! \backslash \! S_{1}]}} \!\sum_{i\in S_{1}} \!\!{h_i \mathbf{c}_i  } \right\|_2^2 }$.

    Denote $U \!=\! \frac{V_{n, t}}{1-V_{n,t}} \!\left\| \mathcal{P}^{\bot}_{\mathbf{c}_{[\mathcal{K}_a \! \backslash \! S_{1}]}} \sum_{i\in S_{1}} \!\!{h_i \mathbf{c}_i  } \right\|_2^2 \!-\! n_t\!=\!n_t U^1$ and $T \!= \!\frac{1}{2} \chi'_2 \!\left( 2\lambda, 2n_t\right) - \mu$. 
    Define the event $L_{2} \!=\! \bigcap_{t \in \mathcal{T}} \bigcap_{S_{1}} \!\left\{ U^1 \!\geq\! \delta_{2,\theta}\right\}$ with $\delta_{2,\theta}\!>\!0$.
    Then, we can obtain
    \begin{equation}\label{PUPE_p1_p2_p3b_noCSI}
      p_{3} \!\leq \!
      \sum_{t \in \mathcal{T}}  \sum_{S_{1}}  \mathbb{E} \!\left[ \mathbb{P} \!\left[\left. T\! >\! U \right| \mathbf{H},\! \mathbf{c}_{\left[\mathcal{K}_a \right]} \right] \!1\!\left[  U^1 \!\!\geq\! \delta_{2,\theta} \right]  \right]
      \!+ \mathbb{P} \!\left[ L_{2}^c\right]\!.\!
    \end{equation}

    To further bound $p_3$, we use the following concentration results \cite{concentration_ineq1, concentration_ineq2}.
    Let $\chi \sim \chi_2(d)$. Then $\forall x >1 $, we have
    \begin{equation}
      \mathbb{P}\left[\chi \leq \frac{d}{x}\right] \leq \exp\left\{-\frac{d}{2}\left(\ln x+\frac{1}{x}-1\right)\right\}.
    \end{equation}
    Let $\chi \sim \chi'_2(a, d)$. Then $\forall x >0 $, we have
    \begin{equation}
      \mathbb{P}[\chi\!\geq\! x\!+\!a\!+\!d\;\!]
      \!\leq \! \exp\!\left\{\!\!-\frac{1}{2}\!\!\left(\!x\!+\!d\!+\!2 a\!-\!\!\sqrt{\!d\!+\!2 a\!} \sqrt{2 x\!+\!d\!+\!2 a}\right)\!\!\right\}\!.
    \end{equation}
    Hence, we have $\mathbb{P}\!\left[\left. T \!>\! U \right| \mathbf{H}, \!\mathbf{c}_{\left[\mathcal{K}_a \right]} \right]  \!\!\leq \!\! \exp\!\left\{\!-n_t f_{n,\theta}(U^1)\!\right\}$.
    Let $V'_{n,\theta} \!=\! \frac{2 V_{n, \theta}}{1\!-\!V_{n,\theta}}$.
    For $0\!<\!V_{n, \theta}\!<\!1$ and $x\!>\!0$, $f_{n,\theta}(x)$ is given by
    \begin{align}
      f_{n,\theta}(x) \!=& (1+V'_{n,\theta})(1+x)  \notag\\
      & \!-\!\! \sqrt{1\!+\!V'_{n,\theta}(1\!+\!x)} \sqrt{1\!+\!2x\!+\!V'_{n,\theta}(1\!+\!x)}>0.
    \end{align}
    It is a monotonically increasing function of $x$.
    Then, we have
    \begin{align}\label{PUPE_p1_p2_p3c_noCSI}
      p_{3}\!&\leq \sum_{\theta \in \Theta_n\!\backslash\!\{1\}} \!\!\!\!\!\! \exp\!\left\{\!K_a h(\theta) \!- \! n_tf_{n,\theta}(\delta_{2,\theta}) \!-\! \frac{1}{2}\!\ln \!\left(2\pi\theta K_a(1\!-\!\theta) \right)\!\right\}\notag \\
      & \;\;\;\; + \exp\left\{- n_tf_{n,\theta=1}(\delta_{2,\theta=1}) \right\} \!+\! \mathbb{P}\left[ L_{2}^c \right].
    \end{align}

    We have $\left\| \mathcal{P}^{\bot}_{\mathbf{c}_{[\mathcal{K}_a \! \backslash \! S_{1}]}} \!\sum_{i\in S_{1}} \!\!{h_i \mathbf{c}_i  } \right\|_2^2 \!\!\sim\! \frac{P'}{2} \!\sum_{i\in S_{1}}  \!\!{\left|h_i \right|^2}\!\;\! \chi_2\!\left(2n_t\right)$ conditioned on $\mathbf{H}$.
    Let $L_3 \!=\! \!\left\{ \!\frac{\chi_2\left(2n_t\right)}{2n_t} \!\geq\! 1\!-\!\delta_{3}\!\right\}$ with $0\!<\!\delta_{3}\!<\!1$ and we have $p_4 = \mathbb{P}\left[ L_3^c \right] \leq \exp\left\{-n_t \left(- \ln \left( {1-\delta_{3}} \right) - \delta_{3}\right)\right\} $.
    Given $W_{\theta} $ in \eqref{W_achi_noCSI}, we can bound $p_5 = \mathbb{P} \left[ L_{2}^c\right]$ as
    \begin{align}\label{PUPE_p1_p2_p3_p4_noCSI}
      p_{5}\!&\leq \mathbb{P}\!\left[\bigcup_{t \in \mathcal{T}} \bigcup_{S_{1}}
      \left\{ \! P'\!\sum_{i\in S_{1}}  \!{\left|h_i \right|^2} \frac{\chi_2\!\left(2n_t\right)}{2n_t}
      \!<\! W_{\theta} \! \right\}\!\cap\! \left\{ L_3 \right\} \!\right]
      \!+\! \mathbb{P} \left[ L_{3}^c\right] \notag \\
      & \leq  \sum_{t \in \mathcal{T}} \mathbb{P}\!\left[
      P'\!\!\sum_{i=K_a\!-t+1}^{K_a}  \!{\left|h_i^{\downarrow} \right|^2}
      \!<\!\frac{W_{\theta}}{1-\delta_{3}} \!\right]
      + p_4\notag\\
      &= p_6 + p_4.
    \end{align}
    Define the event $L_{4}\!\! =\!\!\left\{ \!\!\frac{1}{K_a}\! \!\sum_{j= K_a\!-t+1}^{ K_a}\!\left|{{h}^{\downarrow}_{j}} \!\right|^{2}
    \!\!\!\!= \! \xi \!\left( 1\!-\!\theta, \!1 \right) \!+\! o(1)\!  \right\}$ with $\mathbb{P}\left[ L_{4}^{c}\right]$ exponentially small in $n$. $p_6$ can be bounded as
    \begin{align}\label{PUPE_p1_p2_p3_p6_noCSI}
      p_{6} &\leq\! \sum_{t \in \mathcal{T}} \mathbb{P}\!\left[\!
      \left\{\!P'\! \!\!\!\sum_{i=K_a\!-t+1}^{K_a}  \!\!{\left|h_i^{\downarrow} \right|_2^2}
      \!<\!\frac{W_{\theta}}{1-\delta_{3}} \!\right\}\cap \left\{ L_4 \right\}\!\right]
      \!+ \!\sum_{t \in \mathcal{T}} \mathbb{P}\!\left[ L_4^c \right]\notag \\
      & \leq\! \sum_{t \in \mathcal{T}} 1\!\left[
      {P' \!K_a \!\left(\xi \left( {1-\theta},1\right)\!+\!o(1)\right) }
      \!<\!\frac{W_{\theta}}{1\!-\!\delta_{3}}\right]
      \!+\! o(1).
    \end{align}

    Hence, $p_{2}$ can be bounded as
    \begin{align}\label{PUPE_p_suma_noCSI}
      p_{2} \leq & \sum_{\theta \in \Theta_n } \left\{ \exp\left\{ o(n) - n \left((1- p_a\mu)\delta_{1,\theta} - p_a\mu h(\theta) \right)\right\}  \right.\notag \\
      &  +  \exp\left\{ o(n) \!-\! n \!\left((1\!-\!p_a\mu\!+\!\theta p_a \mu)f_{n,\theta}(\delta_{2,\theta}) \!-\! p_a\mu h(\theta) \right)\right\} \notag \\
      &  + \left. \!1\!\left[
      {P' \!K_a \!\left(\xi \! \left( 1\!-\!\theta ,1\right)\!+\!o(1)\right)}\!<\! \!\frac{W_{\theta}}{1\!-\!\delta_{3}} \right] \!\right\} \notag \\
      & + \exp\left\{-n_t \left(- \!\ln\! \left( {1\!-\!\delta_{3}} \right) \!-\! \delta_{3} \right)\right\} + o(1).
    \end{align}
    For $\forall \theta \!\in\! \Theta \!=\! (\epsilon,1]$, choosing $\delta_{1,\theta} \!>\! \delta_{1,\theta}^{*}$, $\delta_{2,\theta} \!>\! \delta_{2,\theta}^{*}$, $\delta_{3} \!>\! \delta_{3}^{*}$, and $K_aP' \!>\! P_{tot,a}^{'}\left(\theta\right)$ will ensure $\limsup_{n\to\infty} p_2 \!=\! 0$.
  \end{proof}

\section{ Converse Bound } \label{section4}
\subsection{CSIR} \label{section4_sub1}
  \begin{prop}\label{prop_converse_CSIR}
    We assume spectral efficiency $S$ and target PUPE $\epsilon$ are fixed.
    With CSIR, we can obtain $\varepsilon^{*} (M, \mu, p_a, \epsilon) \geq \inf \!\frac{P_{tot,a}}{S}$, where infimum is taken over all ${P_{tot,a}}\!>\!0$ satisfying
    \begin{equation}\label{P_tot_conv_CSIR}
      P_{tot,a} \!\geq
      \frac{2^{ p_a \mu \theta k - p_a \mu \epsilon \! \log_2(M\!-\!1) - p_a \mu h_2(\epsilon) }\!-\!1 }
      {\xi\!\left(1-\theta,1 \right) }, \forall \theta\!\in\!(0,1],
    \end{equation}
    \begin{equation}\label{P_tot_conv_CSIR_singleUE}
      \epsilon \geq 1-\mathbb{E}\left[Q\left(Q^{-1}\left(\frac{1}{M}\right)-\sqrt{\frac{2 P_{t o t,a}}{p_a\mu}|h|^{2}}\right)\right].
    \end{equation}
  \end{prop}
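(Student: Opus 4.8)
The plan is to establish the two necessary conditions \eqref{P_tot_conv_CSIR} and \eqref{P_tot_conv_CSIR_singleUE} separately, each by a genie-aided reduction followed by a classical converse tool, and then to observe that any $(n,M,\epsilon,\varepsilon,K,K_a)$ code with $n$ large must have its pair $(\epsilon,P_{tot,a}=\varepsilon S)$ satisfy both conditions up to $o(1)$; taking $n\to\infty$ then gives $\varepsilon^{*}\ge\inf\{P_{tot,a}/S\}$ over the feasible region. Throughout, write $nP=P_{tot,a}/(p_a\mu)$ for the per-codeword energy budget.

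For \eqref{P_tot_conv_CSIR}, fix $\theta\in(0,1]$ and let $\mathcal{J}$ be the set of the $\theta K_a$ active UEs with the smallest fading powers $|h_j|^2$. First I would hand the decoder a genie consisting of the active set $\mathcal{K}_a$, all fading coefficients, and the messages---hence codewords---of the remaining $K_a-\theta K_a$ active UEs; subtracting their contribution from \eqref{receive_y1} leaves the residual observation $\tilde{\mathbf{y}}=\sum_{j\in\mathcal{J}}h_j\mathbf{c}^{j}+\mathbf{z}$. Since a genie can only help and $P_e\le\epsilon$, the expected number of misdecoded messages among $\mathcal{J}$ is still at most $\epsilon K_a$, so Fano's inequality applied to these $\theta K_a$ messages (each from an alphabet of size $M$) gives $H(W_{\mathcal{J}}\mid\tilde{\mathbf{y}},\mathbf{H},\mathcal{K}_a)\le\theta K_a\,h_2(\epsilon/\theta)+\epsilon K_a\log_2(M-1)$; the elementary perspective bound $\theta h_2(\epsilon/\theta)\le h_2(\epsilon)$ for $\theta\le 1$ (a consequence of concavity of $h_2$) upgrades this to $H(W_{\mathcal{J}}\mid\tilde{\mathbf{y}},\mathbf{H},\mathcal{K}_a)\le K_a h_2(\epsilon)+\epsilon K_a\log_2(M-1)$, whence $I(W_{\mathcal{J}};\tilde{\mathbf{y}}\mid\mathbf{H},\mathcal{K}_a)\ge\theta K_a k-K_a h_2(\epsilon)-\epsilon K_a\log_2(M-1)$. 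For the matching upper bound I would treat the residual channel as point-to-point: since codewords of distinct UEs are independent and each has energy at most $nP$, the conditional covariance of $\tilde{\mathbf{y}}$ given $\mathbf{H}$ has trace at most $n+nP\sum_{j\in\mathcal{J}}|h_j|^2$, and the Gaussian maximum-entropy bound yields $I(W_{\mathcal{J}};\tilde{\mathbf{y}}\mid\mathbf{H},\mathcal{K}_a)\le n\,\mathbb{E}\!\left[\log_2\!\big(1+P\textstyle\sum_{j\in\mathcal{J}}|h_j|^2\big)\right]$. Using $K=\mu n$, $K_a=p_a\mu n$, and the order-statistic concentration behind \eqref{event_Ln}, namely $\sum_{j\in\mathcal{J}}|h_j|^2=\sum_{j=(1-\theta)K_a+1}^{K_a}|h_j^{\downarrow}|^2\to K_a\,\xi(1-\theta,1)$, I would divide by $n$ and let $n\to\infty$ to reach $p_a\mu\theta k-p_a\mu h_2(\epsilon)-p_a\mu\epsilon\log_2(M-1)\le\log_2\!\big(1+P_{tot,a}\,\xi(1-\theta,1)\big)$, which rearranges to \eqref{P_tot_conv_CSIR}.

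For \eqref{P_tot_conv_CSIR_singleUE}, since $P_e\le\epsilon$ is an average over the $K_a$ active UEs there is some active UE $j^{*}$ whose message is misdecoded with probability at most $\epsilon$; I would reveal to the decoder $\mathcal{K}_a$, all fadings, and the messages of every active UE except $j^{*}$, and cancel them, leaving a single-user $M$-ary detection problem over $\tilde{\mathbf{y}}=h\mathbf{c}^{j^{*}}+\mathbf{z}$ with $h$ known and $\|\mathbf{c}_m^{j^{*}}\|_2^2\le nP$, whose error is still at most $\epsilon$. Conditioning on $h$, I would apply the meta-converse to each of the $M$ decoding regions with the noise-only output law $\mathcal{CN}(0,\mathbf{I}_n)$: the regions partition $\mathbb{C}^{n}$, and for a pair of complex Gaussians the optimal type-II error satisfies $\beta_{1-\lambda}\big(\mathcal{CN}(h\mathbf{c},\mathbf{I}_n),\mathcal{CN}(0,\mathbf{I}_n)\big)=Q\!\big(Q^{-1}(1-\lambda)+\sqrt{2|h|^2\|\mathbf{c}\|_2^2}\big)$, so summing the Neyman--Pearson bounds and using the power constraint gives $1\ge\sum_{m=1}^{M}Q\!\big(Q^{-1}(1-\lambda_m(h))+\sqrt{2nP|h|^2}\big)$, with $\lambda_m(h)$ the conditional error of message $m$. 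Because $\lambda\mapsto Q\big(Q^{-1}(1-\lambda)+a\big)$ is convex on $(0,1)$ for every $a\ge 0$, Jensen's inequality applied with $\tfrac1M\sum_m\lambda_m(h)=\epsilon(h)$ yields $Q\big(Q^{-1}(1-\epsilon(h))+\sqrt{2nP|h|^2}\big)\le 1/M$, i.e.\ $\epsilon(h)\ge 1-Q\big(Q^{-1}(1/M)-\sqrt{2nP|h|^2}\big)$; taking expectation over $h\sim\mathcal{CN}(0,1)$ and substituting $nP=P_{tot,a}/(p_a\mu)$ gives \eqref{P_tot_conv_CSIR_singleUE}.

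The genie reductions are routine, so I expect the two delicate points to be the following. In \eqref{P_tot_conv_CSIR}, one must justify both the perspective bound replacing $\epsilon/\theta$ by $\epsilon$ inside $h_2(\cdot)$ and the concentration of $\sum_{j\in\mathcal{J}}|h_j|^2$ for the $\mathbf{H}$-dependent subset $\mathcal{J}$, while tracking that the $o(n)$ error terms vanish in the limit. In \eqref{P_tot_conv_CSIR_singleUE}, the hard part is the chain ``per-$h$ meta-converse $\Rightarrow$ Jensen over messages $\Rightarrow$ expectation over $h$'', which hinges on the convexity of $\lambda\mapsto Q(Q^{-1}(1-\lambda)+a)$; a direct computation shows its second derivative equals $a\,e^{a Q^{-1}(\lambda)-a^{2}/2}/\varphi(Q^{-1}(\lambda))\ge 0$, where $\varphi$ is the standard Gaussian density, and without this convexity one obtains only a worst-case-in-$h$ condition rather than the averaged form stated.
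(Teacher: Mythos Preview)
Your proposal is correct and follows essentially the same route as the paper: a genie revealing $\mathcal{K}_a$ together with the messages of the $(1-\theta)K_a$ strongest-fading users, per-user Fano combined with concavity of $h_2$ (your perspective bound $\theta h_2(\epsilon/\theta)\le h_2(\epsilon)$ is exactly what the paper uses implicitly), the Gaussian capacity upper bound on the residual channel, and finally a limit in $n$; the paper justifies that last interchange of limit and expectation by uniform integrability of $\{\log_2(1+P_{tot,a}S_{K_a})\}$, which is the clean way to close the ``delicate point'' you flagged. For \eqref{P_tot_conv_CSIR_singleUE} the paper simply invokes the single-user finite-blocklength converse of Polyanskiy--Poor--Verd\'u, while you spell out the underlying meta-converse and Jensen-via-convexity argument.
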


  \begin{proof}\label{proof_converse_CSIR}
    Let $\mathcal{W}_{\mathcal{K}_a} \!\!=\!\! \left\{W_i\!: i\!\in\! {\mathcal{K}_a} \!\right\}$ be the sent messages of active UEs,
    $\mathcal{X}_{\mathcal{K}_a} \!\!=\!\! \left\{\mathbf{c}^i\!: i\!\in\! {\mathcal{K}_a} \!\right\}$ be corresponding codewords,
    $\mathbf{y}_{\mathcal{K}_a}$ be the received vector, and $\hat{\mathcal{W}}_{\mathcal{K}_a} \!\!=\!\! \left\{\hat{W}_i\!: i\!\in\! {\mathcal{K}_a} \!\right\}$ be the decoded messages.
    We have the Markov chain: $\mathcal{W}_{\mathcal{K}_a} \!\!\!\to\!\! \mathcal{X}_{\mathcal{K}_a} \!\!\!\to\!\! \mathbf{y}_{\mathcal{K}_a} \!\!\!\to\!\! \hat{\mathcal{W}}_{\mathcal{K}_a}$.
    We assume a genie reveals the set $\mathcal{K}_a$ of active UEs and a set $S_1 \!\subset\! \mathcal{K}_a$ for messages $\mathcal{W}_{S_1} \!=\! \left\{W_i\!: i\!\in\! {S_1} \right\}$ and fading coefficients ${h}_{S_1} \!=\! \left\{h_i\!: i\!\in\! {S_1} \right\}$ to the decoder. Let $S_2 \!=\! \mathcal{K}_{a}\!\backslash S_1$ with $\left|S_2\right| \!=\! \theta K_a$ and $\theta \!\in \!\Theta_n \!=\! (0, 1]\cap \!\left\{\! \frac{i}{K_a}\!\!: \!i\!\in\! [{K}_a] \right\} $. The equivalent received message is given by
    \begin{equation}\label{receive_y_G}
      \mathbf{y}^G = \sum_{i\in{S_2}}{h}_i\mathbf{c}^{i}+\mathbf{z} \in \mathbb{C}^{n}.
    \end{equation}
    Denote the decoded message for the $i$-th UE with genie as $\hat{W}_{i}^{G}$.
    Let $L_i \!=\! 1\!\left[ {W}_{i}\!\neq\! \hat{W}_{i}^{G}\right]$ and $P_{e,i}^G \!=\! \mathbb{E}\left[ L_i \right]$. We have $P_{e,i}^G \!=\! 0$ for $i\!\in\! S_1$.
    The averaged PUPE is $P_{e}^G\!=\!\frac{1}{K_a} \!\sum_{i\in {S_2}} \!P_{e,i}^G \leq \epsilon$.

    For $i \in S_2$, based on Fano inequality, we have
    \begin{equation}\label{fano}
      \log_2M \!-\! P_{e,i}^G \log_2(M-1) \!-\! h_2\!\left( P_{e,i}^G \right) \!\leq\! I\left(W_i;\hat{W}_i^{G} \right).
    \end{equation}
    Considering $\sum_{i\in S_2} \!I\!\left(\!W_i; \;\!\!\hat{W}_i^{G} \!\right) \!\leq\! n \mathbb{E}\!\left[\log_2\!\;\!\!\left( \!1\!+\!\;\!\! P \!\sum_{i\in S_2} \!\!\left| h_i\right|^2\right) \!\right]$
    and the concavity of $h_2$, we can obtain
    \begin{equation}
      {\theta}k \!-\! P_{e}^G \!\log_2(M\!-\!1) \!-\! h_2\!\left( P_{e}^G \right)
      \!\leq \!\! \frac{n}{K_a} \mathbb{E}\!\left[\log_2\!\!\left( \!\!1\!+\! P \!\sum_{i\in S_2} \!\left| h_i\right|^2\!\!\right) \!\right] \!\!.
    \end{equation}
    Since $P_{e}^G \leq \epsilon \leq 1-\frac{1}{M}$, we have $P_{e}^G \log_2(M-1) + h_2\left( P_{e}^G \right) \leq \epsilon \log_2(M-1) + h_2\left( \epsilon \right) $.
    We can obtain
    \begin{equation}
      {\theta k}\;\!\! -\;\!\! \epsilon \log_2(M\!\!\;\!-\;\!\!1) \;\!\!-\!\;\! h_2\!\left( \epsilon \right) \!\!\leq\! \! \frac{n}{K_a} \! \mathbb{E}\!\!\left[\!\log_2\!\!\left( \!\!1\!+\! P \!\!\!\!\!\!\!\sum_{i=(1\!-\!\theta) \!K_a}^{K_a} \!\!\!\!\left| h_i^{\downarrow}\!\right|^{2}\!\right)\! \!\right]\!\!.\!
    \end{equation}

    For $a,b\!\in\! (0,1]$, let $S_{K_a}\!(a,b) \!=\! \frac{1}{K_a}\!\!\sum_{i=a{K_a}}^{b{K_a}} \!\!\left| h_i^{\downarrow}\right|^2 $ satisfying $S_{K_a}\!(a,b) {\to} \xi(a,b)$ as ${K_a}\!\!\!\to \!\!\!\infty$ and $\mathbb{E}\!\left[S_{K_a}\!(a,b)\right]\!\!\! \leq\!\!\! 1$.
    The family of random
    variables $\left\{ S_{K_a}\!(a,b)\! :\! {K_a}\!\in \! \mathbb{N}_{+}\!\right\}$ is uniformly integrable
    based on the dominated convergence theorem \cite{uniformly_integrable}.
    Since $0\!\!<\!\!\log_2\left( 1\!+\!P_{tot,a}S_{K_a}\!(a,b)\right) \!\!<\!\! P_{tot,a}S_{K_a}\!(a,b)$, the family $\left\{ \log_2\!\left( 1\!+\!P_{tot,a}S_{K_a}(a,b)\right) \!: \!{K_a}\!\in \! \mathbb{N}_{+}\right\}$ is also uniformly integrable.
    As ${K_a}\!\!\to \!\!\infty$, since $\log_2\!\left( 1\!+\!P_{tot,a}S_{K_a}\!(a,b)\right) \!\!\to\!\! \log_2\!\left( 1\!+\!P_{tot,a} \xi(a,b)\right)$,
    we have $\mathbb{E}\!\left[ \log_2\!\left( 1\!+\!P_{tot,a}S_{K_a}\!(a,b)\right) \right] \!\!\to\!\! \log_2\!\left( 1\!+\!P_{tot,a} \xi(a,b)\right)$ \cite{uniformly_integrable}.
    As $n\to \infty$, we can obtain \eqref{P_tot_conv_CSIR}.

    In addition, \eqref{P_tot_conv_CSIR_singleUE} is derived for a single UE sending $k$ bits with PUPE $\epsilon$ in quasi-static fading channels \cite{singleUE}.
  \end{proof}

\subsection{no-CSI} \label{section4_sub2}
  \begin{prop}\label{prop_converse_noCSI}
    We generate codebooks independently for UEs with each entry i.i.d. from $\mathcal{CN}(0,P)$. Given spectral efficiency $S$ and target PUPE $\epsilon$,
    we have $\varepsilon^{*} (M, \mu, p_a, \epsilon) \geq \inf \!\frac{P_{tot,a}}{S}$, where infimum is taken over all ${P_{tot,a}}\!>\!0$ satisfying
    \begin{equation}\label{P_tot_conv_noCSI}
      \ln \!M\;\!\!-\epsilon \;\!\!\ln (\;\!\!M\;\!\!-\;\!\!1)-h(\epsilon) \;\!\!\!\leq\;\!\!\!
      {M} \!\mathcal{V}\;\!\!\!\left(\!\frac{1}{{p_a}\;\!\!\mu M}\;\!\!, \!P_{tot,a}\!\!\right)\;\!\!-\;\!\!\;\!\!\mathcal{V}\;\!\!\!\left(\!\frac{1}{p_a\;\!\!\mu}, \!P_ {tot,a}\!\!\right) \!\;\!\!,
    \end{equation}
    \begin{equation}\label{P_tot_conv_noCSI_v}
      \mathcal{V}(r, \!\gamma)\!=\!r \!\ln (1\;\!\!+\;\!\!\gamma\;\!\!-\;\!\!\!\mathcal{F}(r, \!\gamma)\;\!\!)
      \;\!\!+\;\!\!\ln (1\;\!\!+\;\!\!r \gamma\;\!\!-\;\!\!\mathcal{F}(r,\! \gamma))\;\!\!-\;\!\!\frac{\mathcal{F}(r, \!\gamma)}{\gamma} \!,\!\!
    \end{equation}
    \begin{equation}\label{P_tot_conv_noCSI_f}
      \mathcal{F}(r, \gamma)\!=\!\frac{1}{4}\!\left(\!\sqrt{\!\gamma\!\left(\!\sqrt{r}+1\right)^{2}\!+\!1}
      \!-\!\sqrt{\!\gamma\left(\sqrt{r}-1\right)^{2}\!+\!1}\right)^{\!2}\!.
    \end{equation}
  \end{prop}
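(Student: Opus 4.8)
The plan is to follow the classical converse recipe for many-access channels under the per-user error metric, adapting the Fano-type argument of Proposition~\ref{prop_converse_CSIR} to the no-CSI setting, where mutual information must be evaluated with the fading coefficients treated as part of the (unknown) channel rather than as side information. First I would introduce the same genie as before: reveal the active set $\mathcal{K}_a$ and a random subset $S_1\subset\mathcal{K}_a$ together with the messages $\mathcal{W}_{S_1}$, leaving $S_2=\mathcal{K}_a\backslash S_1$ with $|S_2|=\theta K_a$ to be decoded from $\mathbf{y}^G=\sum_{i\in S_2}h_i\mathbf{c}^i+\mathbf{z}$; crucially, in the no-CSI case the $h_i$ are \emph{not} revealed. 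Applying Fano to the sum and using concavity exactly as in \eqref{fano} and the lines after it, I get the single scalar bound $\theta k-\epsilon\log_2(M-1)-h_2(\epsilon)\le \frac{1}{K_a}I(\mathcal{W}_{S_2};\mathbf{y}^G)$, or in nats $\theta k\ln 2 - \epsilon\ln(M-1)-h(\epsilon)\le\frac{1}{K_a}I(\mathcal{W}_{S_2};\mathbf{y}^G)$. Since the codewords are i.i.d.\ Gaussian $\mathcal{CN}(0,P)$ and the messages are uniform, $I(\mathcal{W}_{S_2};\mathbf{y}^G)$ is upper bounded by $I(\mathbf{X}_{S_2};\mathbf{y}^G)$, the mutual information of a $|S_2|$-user Gaussian MAC with Rayleigh fading and no CSI. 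Specializing to $\theta=1$ (so $|S_2|=K_a=p_a\mu n$) is the natural choice to match the announced $\mathcal{V}$ terms.

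The heart of the argument is then to show $\frac{1}{K_a}I(\mathbf{X}_{S_2};\mathbf{y}^G)\le M\,\mathcal{V}\!\left(\frac{1}{p_a\mu M},P_{tot,a}\right)-\mathcal{V}\!\left(\frac{1}{p_a\mu},P_{tot,a}\right)$ in the limit $n\to\infty$. Here $\mathbf{y}^G\mid\{\mathbf{c}^i\}$ is $\mathcal{CN}(0,\mathbf{I}_n+\sum_i P|h_i|^2\,\hat{\mathbf{c}}^i\hat{\mathbf{c}}^{iH})$-ish only after conditioning on fading; unconditionally $\mathbf{y}^G$ has a rotationally invariant law determined by the Gram structure of the active codewords. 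I would write $I(\mathbf{X}_{S_2};\mathbf{y}^G)=h(\mathbf{y}^G)-h(\mathbf{y}^G\mid\mathbf{X}_{S_2})$, bound $h(\mathbf{y}^G)$ by the entropy of a Gaussian with the same (scalar) covariance $(1+K_aP)\mathbf{I}_n$ giving $n\ln(\pi e(1+K_aP))=n\ln(\pi e(1+P_{tot,a}))$, and compute (or lower bound) $h(\mathbf{y}^G\mid\mathbf{X}_{S_2})$ using the eigenvalue distribution of $\mathbf{A}_{S_2}\mathbf{A}_{S_2}^H$. As $n\to\infty$ with $K_a=p_a\mu n$, the empirical spectrum of the normalized Gram matrix converges (Mar\v{c}enko--Pastur with aspect ratio $K_a/n=p_a\mu$, further randomized by the $|h_i|^2$), and the per-letter entropy converges to an expectation of $\ln$ of a shifted MP-type variable. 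The functions $\mathcal{F}$ and $\mathcal{V}$ in \eqref{P_tot_conv_noCSI_f}--\eqref{P_tot_conv_noCSI_v} are exactly the closed forms that arise from integrating $\ln(1+\gamma\lambda)$ against a Mar\v{c}enko--Pastur law with ratio $r$ — the Shannon-transform of the MP distribution, whose $\mathcal{F}$ is the classical quantity $(\sqrt{\gamma(1+\sqrt r)^2+1}-\sqrt{\gamma(1-\sqrt r)^2+1})^2/4$. The two terms $M\,\mathcal{V}(\tfrac{1}{p_a\mu M},\cdot)-\mathcal{V}(\tfrac{1}{p_a\mu},\cdot)$ presumably come from decomposing the codeword Gram matrix: since each active UE contributes one of $M$ equiprobable codewords, a counting/averaging over which of the $M$ entries of $\boldsymbol\beta$ is nonzero per block effectively replaces the system by $M$ ``virtual'' users with density scaled down by $M$ (hence ratio $\frac{1}{p_a\mu M}$, weighted by $M$) minus a correction at ratio $\frac{1}{p_a\mu}$ for the single true codeword — a Jensen/convexity step that converts $\mathbb{E}[\ln(\cdot)]$ over the random support into the stated difference of $\mathcal{V}$'s.

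The main obstacle I anticipate is making the random-matrix limit rigorous: I must justify that $\frac{1}{n}h(\mathbf{y}^G\mid\mathbf{X}_{S_2})$ converges to $\mathbb{E}[\ln(\pi e(1+P_{tot,a}\lambda))]$ with $\lambda$ MP-distributed, which needs (i) almost-sure convergence of the empirical spectral distribution of the fading-weighted Wishart matrix, (ii) uniform integrability of $\ln(1+P_{tot,a}\lambda)$ against these measures so expectations pass to the limit (the same uniform-integrability lemma \cite{uniformly_integrable} used in Proposition~\ref{prop_converse_CSIR} should apply, since $0<\ln(1+P_{tot,a}\lambda)<P_{tot,a}\lambda$), and (iii) correctly tracking the Gaussian-codebook structure through the $M$-fold block constraint so that the virtual-user decomposition yielding the $M\mathcal{V}(\tfrac1{p_a\mu M},\cdot)-\mathcal{V}(\tfrac1{p_a\mu},\cdot)$ form is valid rather than merely heuristic. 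A secondary technical point is the $h(\mathbf{y}^G)\le$ Gaussian-max step: one must check the covariance of $\mathbf{y}^G$ is indeed scalar $(1+P_{tot,a})\mathbf{I}_n$, which follows from the isotropy and independence of the Gaussian codewords and $\mathcal{CN}(0,1)$ fading. Once these limits are in hand, rearranging $\theta k\ln2-\epsilon\ln(M-1)-h(\epsilon)\le \lim_n\frac{1}{K_a}I(\mathbf{X}_{S_2};\mathbf{y}^G)$ at $\theta=1$ and noting $k=\ln M/\ln 2$ gives \eqref{P_tot_conv_noCSI}, and dividing the resulting feasible $P_{tot,a}$ by $S$ yields the claimed lower bound on $\varepsilon^*$.
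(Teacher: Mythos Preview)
Your Fano step and the identification of the Mar\v{c}enko--Pastur/Shannon-transform $\mathcal{V}(r,\gamma)$ are correct, and indeed your conditional term $h(\mathbf{y}^G\mid\mathbf{X}_{S_2})=\mathbb{E}[\ln\det(\pi e(\mathbf{I}_n+\tilde{\mathbf{A}}\tilde{\mathbf{A}}^H))]$ is exactly what produces the subtracted $\mathcal{V}(\tfrac{1}{p_a\mu},P_{tot,a})$. The gap is in the other term. Bounding $h(\mathbf{y}^G)$ by the Gaussian-max entropy gives $\frac{n}{K_a}\ln(1+P_{tot,a})=\frac{1}{p_a\mu}\ln(1+P_{tot,a})$, which is \emph{not} $M\,\mathcal{V}(\tfrac{1}{p_a\mu M},P_{tot,a})$; your ``virtual user decomposition / Jensen step'' is only a heuristic and does not recover that quantity from $h(\mathbf{y}^G)$.

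The paper obtains the two $\mathcal{V}$ terms by a different, sharper decomposition. It conditions on the full active codebook matrix $\bar{\mathbf{A}}\in\mathbb{C}^{n\times K_aM}$ (which the decoder knows) and works with $I(\bar{\boldsymbol{\beta}};\mathbf{y}\mid\bar{\mathbf{A}})$. The key identity is the chain rule applied to the pair $(\bar{\boldsymbol{\beta}},\bar{\mathbf{H}}\bar{\boldsymbol{\beta}})$: since $\bar{\boldsymbol{\beta}}\to\bar{\mathbf{H}}\bar{\boldsymbol{\beta}}\to(\mathbf{y},\bar{\mathbf{A}})$ is Markov, one gets
\[
I(\bar{\boldsymbol{\beta}};\mathbf{y}\mid\bar{\mathbf{A}})
= I(\bar{\mathbf{H}}\bar{\boldsymbol{\beta}};\mathbf{y}\mid\bar{\mathbf{A}})
- I(\bar{\mathbf{H}}\bar{\boldsymbol{\beta}};\mathbf{y}\mid\bar{\boldsymbol{\beta}},\bar{\mathbf{A}}).
\]
The first term is upper bounded (Gaussian input with covariance $\frac{1}{M}\mathbf{I}_{K_aM}$) by $\mathbb{E}[\ln\det(\mathbf{I}_n+\frac{1}{M}\bar{\mathbf{A}}\bar{\mathbf{A}}^H)]$, a log-det over an $n\times K_aM$ i.i.d.\ Gaussian matrix, which by the Verd\'u--Shamai closed form equals $K_aM\,\mathcal{V}(\tfrac{1}{p_a\mu M},P_{tot,a})$. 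The second term is computed \emph{exactly}: given $\bar{\boldsymbol{\beta}}$, $\bar{\mathbf{H}}\bar{\boldsymbol{\beta}}$ reduces to the Gaussian fading vector $\tilde{\mathbf{h}}$, so $I(\bar{\mathbf{H}}\bar{\boldsymbol{\beta}};\mathbf{y}\mid\bar{\boldsymbol{\beta}},\bar{\mathbf{A}})=\mathbb{E}[\ln\det(\mathbf{I}_n+\tilde{\mathbf{A}}\tilde{\mathbf{A}}^H)]=K_a\,\mathcal{V}(\tfrac{1}{p_a\mu},P_{tot,a})$. Dividing by $K_a$ yields exactly $M\,\mathcal{V}(\tfrac{1}{p_a\mu M},\cdot)-\mathcal{V}(\tfrac{1}{p_a\mu},\cdot)$. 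So the $M$-factor arises not from a convexity averaging over the $M$ codewords per user, but from the dimension $K_aM$ of the \emph{full} codebook matrix entering the first mutual information; and the subtraction is not an entropy correction but a genuine second mutual-information term that the chain rule forces. Your entropy split $h(\mathbf{y})-h(\mathbf{y}\mid\mathbf{X})$ misses this structure because it never brings the whole codebook $\bar{\mathbf{A}}$ into play.
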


  \begin{proof}\label{proof_converse_noCSI}
    We assume a genie reveals the set of active UEs. Based on the analysis in Section \ref{section4_sub1}, we have
    \begin{equation}
      {K_a}\!\ln\! M \!-\! {K_a} \epsilon \ln(M\!-\!1) \!-\! {K_a} h\!\left(\epsilon\right)
      \!\leq \! I\!\left(\!\mathcal{X}_{\mathcal{K}_a}\!;\!\hat{\mathcal{X}}_{\mathcal{K}_a} \!\right)
      \!\leq\! I\!\left(\left.\bar{\boldsymbol{\beta}};\mathbf{y} \right|\!\bar{\mathbf{A}}\right),
    \end{equation}
    where $\bar{\boldsymbol{\beta}} \!\in \!\mathbb{C}^{K_{\;\!\!a} \;\!\!M}$ indicates which codewords are sent for active UEs and $\;\!\!\bar{\mathbf{A}}\;\!\!$ is the $n\times K_{\;\!\!a} \;\!\!M$ submatrix of $\;\!\!{\mathbf{A}}\;\!\!$ including codewords of active UEs.
    Let $\bar{\mathbf{H}}\!\in\!\mathbb{C}^{K_a M\times K_a M}$ be the submatrix of ${\mathbf{H}}$ including fading coefficients of active UEs.
    Based on the chain rule of mutual information, we have
    \begin{align}
      I\left(\!\left.\bar{\boldsymbol{\beta}}, \bar{\mathbf{H}} \bar{\boldsymbol{\beta}}; \mathbf{y} \right|\!\bar{\mathbf{A}}\right)
      &\!=\! I\left(\left.\bar{\boldsymbol{\beta}};\mathbf{y} \right|\bar{\mathbf{A}}\right) \!+ \! I\left(\left.\bar{\mathbf{H}}\bar{\boldsymbol{\beta}};\mathbf{y} \right|\bar{\boldsymbol{\beta}}, \bar{\mathbf{A}}\right) \notag \\
      &\!=\! I\left(\left. \bar{\mathbf{H}}\bar{\boldsymbol{\beta}};\mathbf{y} \right|\bar{\mathbf{A}}\right) \!+\! I\left(\left.\bar{\boldsymbol{\beta}};\mathbf{y} \right|\bar{\mathbf{H}}\bar{\boldsymbol{\beta}}, \bar{\mathbf{A}}\right).
    \end{align}

    Since $\bar{\boldsymbol{\beta}}\!\to\! \bar{\mathbf{H}}\bar{\boldsymbol{\beta}}\!\to\! (\mathbf{y},\bar{\mathbf{A}})$ forms a Markov chain, the mutual information $I\!\left(\left.\bar{\boldsymbol{\beta}};\mathbf{y} \right|\!\bar{\mathbf{H}}\bar{\boldsymbol{\beta}}, \!\bar{\mathbf{A}}\right) \!=\! 0$. Hence, we have
    $ I\!\left(\left.\!\bar{\boldsymbol{\beta}};\mathbf{y} \right|\!\bar{\mathbf{A}}\right)
      \!=\! I\!\left(\left. \!\bar{\mathbf{H}}\bar{\boldsymbol{\beta}};\mathbf{y} \right|\!\bar{\mathbf{A}}\right) \!-\! I\!\left(\!\left.\bar{\mathbf{H}}\bar{\boldsymbol{\beta}};\mathbf{y} \right|\!\bar{\boldsymbol{\beta}}, \bar{\mathbf{A}}\right)$.
    We can obtain
    \begin{align}
      I\left(\left. \bar{\mathbf{H}}\bar{\boldsymbol{\beta}};\mathbf{y} \right|\bar{\mathbf{A}} = \bar{\mathbf{A}}_1\right)
      &=  I\left(\bar{\mathbf{H}}\bar{\boldsymbol{\beta}}; \bar{\mathbf{A}}_1\bar{\mathbf{H}}\bar{\boldsymbol{\beta}}+\mathbf{z} \right)\notag \\
      & \leq \sup_{\mathbf{u}} I\left(\mathbf{u}; \bar{\mathbf{A}}_1\mathbf{u}+\mathbf{z} \right) \notag \\
      &  = \ln \det\left( \mathbf{I}_{n} + \frac{1}{M}\bar{\mathbf{A}}_1\bar{\mathbf{A}}_1^{H} \right),
    \end{align}
    where $\bar{\mathbf{A}}_1$ is a realization of $\bar{\mathbf{A}}$ and the supremum is over random vector
    $\mathbf{u}$ with $\mathbb{E}\!\left[\mathbf{u} \right] \!\!=\! \mathbf{0}$ and $\mathbb{E}\!\left[\mathbf{u} \mathbf{u}^{H} \right] \!\!=\! \mathbb{E}\!\left[\!\left(\bar{\mathbf{H}}\bar{\boldsymbol{\beta}} \right)\!\! \left(\bar{\mathbf{H}}\bar{\boldsymbol{\beta}} \right)^{\!\!H} \right] \!=\!\! \frac{1}{M}\mathbf{I}_{K_aM}$.
    The supremum is achieved if $\mathbf{u}\!\sim\! \mathcal{CN}\!\left( \mathbf{0}, \!\frac{1}{M}\mathbf{I}_{K_aM} \right)$ \cite{sparsity_pattern}. Based on the random-matrix theory, we have \cite{random_matrix}
    \begin{equation}
      I\!\!\left(\!\left. \bar{\mathbf{H}}\bar{\boldsymbol{\beta}};\!\mathbf{y} \right|\!\bar{\mathbf{A}}\!\right)
      \!\!\leq\! \mathbb{E}\! \!\left[ \!\ln \!\det\!\left(\!\! \mathbf{I}_{n} \!\!+\!\!\frac{1}{M}\!\bar{\mathbf{A}}\bar{\mathbf{A}}^{\!\!H} \!\!\right) \!\right]
      \!\!\!=\!\!{K_a\!M} \mathcal{V}\!\left(\!\frac{1}{p_a\mu M}, \!P_{tot,a}\!\!\right)\!\!.
    \end{equation}

    For any realization $\bar{\mathbf{A}}_1$ of $\bar{\mathbf{A}}$ and $\bar{\boldsymbol{\beta}}_1$ of $\bar{\boldsymbol{\beta}}$, we have
    \begin{align}
      I\left(\left.\bar{\mathbf{H}}\bar{\boldsymbol{\beta}};\mathbf{y} \right|\bar{\boldsymbol{\beta}} \!=\! \bar{\boldsymbol{\beta}}_1, \bar{\mathbf{A}} \!=\! \bar{\mathbf{A}}_1 \right)
      &= I\left(\bar{\mathbf{H}}\bar{\boldsymbol{\beta}}_1; \bar{\mathbf{A}}_1\bar{\mathbf{H}}\bar{\boldsymbol{\beta}}_1+\mathbf{z}  \right) \notag \\
      &= I\left(\tilde{\mathbf{h}}; \tilde{\mathbf{A}}_1\tilde{\mathbf{h}} +\mathbf{z}  \right) \notag \\
      &= \ln \det\!\left( \mathbf{I}_{n} \!+\! \tilde{\mathbf{A}}_1 \tilde{\mathbf{A}}_1^{H} \right),
    \end{align}
    where $\tilde{\mathbf{h}} \in \mathbb{C}^{K_a}$ includes fading coefficients of active UEs and $\tilde{\mathbf{A}}_1$ is the $n\times K_a$ submatrix of $\bar{\mathbf{A}}_1$ formed by columns corresponding to the support of $\bar{\boldsymbol{\beta}}_1$. Hence, we have \cite{random_matrix}
    \begin{equation}
      I\!\left(\!\left.\bar{\mathbf{H}}\bar{\boldsymbol{\beta}};\mathbf{y} \right|\!\bar{\boldsymbol{\beta}}, \!\bar{\mathbf{A}}\right)
      \!=\! \mathbb{E}\!\left[ \ln \!\det\!\left( \mathbf{I}_{n} \!+\! \tilde{\mathbf{A}} \tilde{\mathbf{A}}^{\!\!H} \right) \!\right]
      \!\!=\! K_a \mathcal{V}\!\left(\frac{1}{p_a\mu}, \!P_ {tot,a}\!\right)\!.
    \end{equation}
  \end{proof}

\section{Results and Discussion} \label{section5}
  In this section, we evaluate the bounds derived in this work.

  Given the payload $k \!=\! 100$, active probability $p_a \!=\! 0.6$, and target PUPE $\epsilon \!=\! 0.001$, we show the trade-off of UE density $\mu$ with the minimum energy-per-bit $\varepsilon^{*}$ in Fig. 1.
  For TDMA, we split the blocklength $n$ equally among $K$ UEs.
  To achieve $S \!=\! p_a \mu k$, we obtain the smallest $P^{*}$ and $\varepsilon^{*} \!=\! P^{*}\!/(\mu k)$ ensuring the access of an active UE with rate $\mu k$, blocklength $1/\mu$ and PUPE $\epsilon$ based on the bound in \cite{TDMA_yangwei}.
  From Fig. 1, we can observe perfect multi-user interference (MUI) cancellation effect in quasi-static fading random access channels.
  It means that for small values of $\mu$, the optimal coding system can be performed as if each active UE is operated in isolation without interference.
  The orthogonalization scheme TDMA does not have this behavior.
  Although TDMA has better performance when $\mu\!\to\! 0$, it is
  more energy-inefficient at higher UE density.

  In Fig. 2, given $k \!=\! 100$ and $\epsilon \!=\! 0.001$, we show the trade-off of active UE density $p_a\mu$ and $\varepsilon^{*}$  with no-CSI.
  The converse bound and the achievability bound with the knowledge of UE activity are invariant for $p_a$ if $p_a\mu$ is fixed.
  Given $p_a\mu$, as $p_a$ increases, the uncertainty of active UEs decreases, and the achievability bound of random access can be reduced.
  Besides, this bound is close to the achievability bound with the knowledge of active UE set, 
  i.e., only a bit more energy is required for random access compared with the case where UE activity is known.
  \begin{figure}
  \centering
  \includegraphics[width=0.95\linewidth]{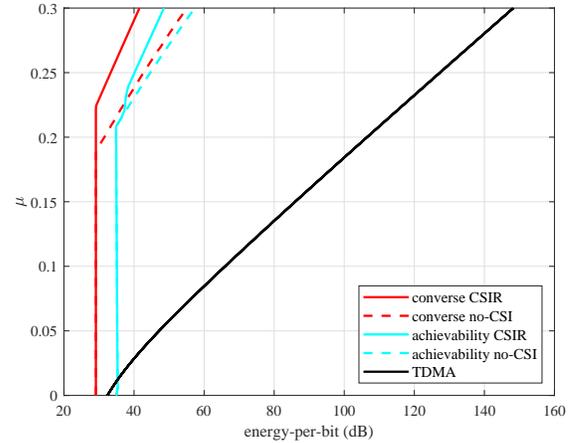}\\
  \caption{$\mu$ versus $\varepsilon^{*}$ with $k = 100$, $p_a = 0.6$, and $\epsilon = 0.001$.}
  \label{fig:1}
  \end{figure}
  \begin{figure}
  \centering
  \includegraphics[width=0.95\linewidth]{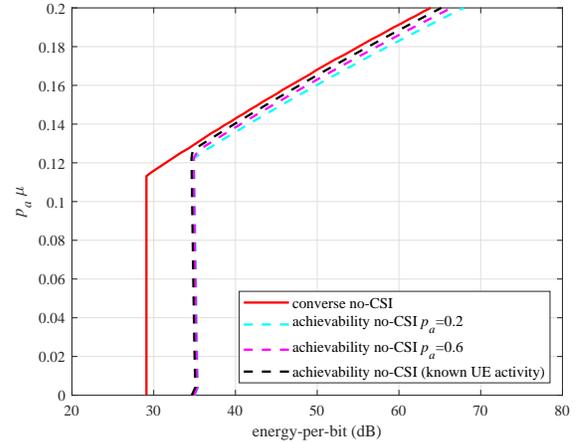}\\
  \caption{$p_a\mu$ versus $\varepsilon^{*}$ with $k = 100$ and $\epsilon = 0.001$.}
  \label{fig:2}
\end{figure}
\section{Conclusion} \label{section6}
  In this work, we assume the number of UEs grows linearly and unboundedly with the blocklength and each active UE has finite data bits to send over quasi-static fading channels. The achievability and converse bounds on the minimum energy-per-bit for reliable massive random access with CSIR and no-CSI are derived.
  Simulation results show perfect MUI cancellation for small values of UE density $\mu$ and the energy-inefficiency of TDMA as $\mu$ increases.
  Besides, with no-CSI, the energy-per-bit for random access is only a bit more than that with the knowledge of UE activity.
  The bounds derived in this work provide energy-efficiency targets for future massive random access coding and communication schemes.


\begin{thebibliography}{00}
\bibitem{wuyp} Y. Wu, X. Gao, S. Zhou, W. Yang, Y. Polyanskiy, and G. Caire, ``Massive access for future wireless communication systems,'' to appear in \emph{IEEE Wireless Commun.}, [Online]. Available: https://arxiv.org/pdf/1910.12678v5.pdf
\bibitem{elements_IT} T. M. Cover and J. A. Thomas, \emph{Elements of Information Theory}. John Wiley $\&$ Sons, 2012.
\bibitem{GuoDN} X. Chen, T.-Y. Chen, and D. Guo, ``Capacity of Gaussian many-access channels,'' \emph{IEEE Trans. Inf. Theory}, vol. 63, no. 6, pp. 3516--3539, 2017.
\bibitem{improved_bound} I. Zadik, Y. Polyanskiy, and C. Thrampoulidis, ``Improved bounds on Gaussian MAC and sparse regression via Gaussian inequalities,'' in \emph{Proc. IEEE Int. Symp. Inf. Theory (ISIT)}, Paris, France, Jul. 2019.
\bibitem{finite_payloads_fading} S. S. Kowshik and Y. Polyanskiy, ``Fundamental limits of many-user MAC with finite payloads and fading,'' [Online]. Available: https://arxiv.org/pdf/1901.06732v3.pdf
\bibitem{A_perspective_on} Y. Polyanskiy, ``A perspective on massive random-access,'' in \emph{Proc. IEEE Int. Symp. Inf. Theory (ISIT)}, Aachen, Germany, Jun. 2017.
\bibitem{RAC_fading} S. S. Kowshik, K. Andreev, A. Frolov, and Y. Polyanskiy, ``Energy efficient coded random access for the wireless uplink,'' [Online]. Available: https://arxiv.org/pdf/1907.09448.pdf
\bibitem{Gallager} R. G. Gallager, \emph{Information Theory and Reliable Communication}. New York, NY, USA: John Wiley $\&$ Sons, Inc., 1968.
\bibitem{Beta_dis} W. Yang, G. Durisi, T. Koch, and Y. Polyanskiy, ``Quasi-static SIMO fading channels at finite blocklength,'' in \emph{Proc. IEEE Int. Symp. Inf. Theory (ISIT)}, Istanbul, Turkey, Jul. 2013.
\bibitem{concentration_ineq1} G. Reeves and M. Gastpar, ``The sampling rate-distortion tradeoff for sparsity pattern recovery in compressed sensing,'' \emph{IEEE Trans. Inf. Theory}, vol. 58, no. 5, pp. 3065--3092, 2012.
\bibitem{concentration_ineq2} L. Birg\'e, ``An alternative point of view on Lepski's method,'' \emph{Lecture Notes-Monograph Series}, pp. 113--133, 2001.
\bibitem{uniformly_integrable} J. S. Rosenthal, \emph{A First Look at Rigorous Probability Theory}. World Scientific Publishing Company, 2006.
\bibitem{singleUE} Y. Polyanskiy, H. V. Poor, and S. Verd\'u, ``Minimum energy to send $k$ bits through the Gaussian channel with and without feedback,'' \emph{IEEE Trans. Inf. Theory}, vol. 57, no. 8, pp. 4880--4902, 2011.
\bibitem{sparsity_pattern} G. Reeves and M. C. Gastpar, ``Approximate sparsity pattern recovery: Information-theoretic lower bounds,'' \emph{IEEE Trans. Inf. Theory}, vol. 59, no. 6, pp. 3451--3465, 2013.
\bibitem{random_matrix} S. Verd\'u and S. Shamai, ``Spectral efficiency of CDMA with random spreading,'' \emph{IEEE Trans. Inf. Theory}, vol. 45, no. 2, pp. 622--640, 1999.
\bibitem{TDMA_yangwei} W. Yang, G. Durisi, T. Koch, and Y. Polyanskiy, ``Quasi-static multiple-antenna fading channels at finite blocklength,'' \emph{IEEE Trans. Inf. Theory}, vol. 60, no. 7, pp. 4232--4265, 2014.
\end{thebibliography}
\end{document}